\declaretheoremstyle[
headfont=\bfseries,%
headpunct={.}, %
numbered=yes,
spaceabove=10pt, %
postheadspace=10pt ]{bettertheorem}
\declaretheorem[name=Theorem,style=bettertheorem]{theorem}
\declaretheorem[name=Assumption,style=bettertheorem]{assumption}
\declaretheorem[name=Example,style=bettertheorem]{example}
\DeclarePairedDelimiter\floor{\lfloor}{\rfloor}
\newcommand{\overbar}[1]{\mkern 1.5mu\overline{\mkern-1.5mu#1\mkern-1.5mu}\mkern 1.5mu}
\DeclareOldFontCommand{\bf}{\normalfont\bfseries}{\mathbf}
 \newcolumntype{P}[1]{>{\centering\arraybackslash}p{#1}}
 \newcolumntype{L}[1]{>{\raggedright\arraybackslash}p{#1}}
\newcolumntype{M}[1]{>{\centering\arraybackslash}m{#1}}
\title{Inference on the New Keynesian Phillips Curve with Very Many Instrumental Variables}
\author{
  Max-Sebastian Dovì\thanks{This research is funded by the German National Merit Foundation and the European Research Council via Consolidator grant number 647152. I thank Sophocles Mavroeidis and Anna Mikusheva for very helpful comments and suggestions. I also thank seminar participants at the University of Oxford, and participants at the 2019 European Conference of the Econometrics Community. All errors and omissions are my own.}\\
%  University of Oxford\\ \bigskip
  max-sebastian.dovi@economics.ox.ac.uk\\
}
\begin{document}
\maketitle

\begin{abstract}

	\noindent Limited-information inference on New Keynesian Phillips Curves (NKPCs) and other single-equation macroeconomic relations is characterised by weak and high-dimensional instrumental variables (IVs). Beyond the efficiency concerns previously raised in the literature, I show by simulation that ad-hoc selection procedures can lead to substantial biases in post-selection inference. I propose a Sup Score test that remains valid under dependent data, arbitrarily weak identification, and a number of IVs that increases exponentially with the sample size. Conducting inference on a standard NKPC with 359 IVs and 179 observations, I find substantially wider confidence sets than those commonly found.\\
		
	\end{abstract}

\pagebreak

\section{Introduction}

Instrumental variable (IV) methods are often used to conduct limited-information inference on (structural) single-equation macroeconomic relations that describe the dependence of a scalar variable on a set of covariates. Examples of such macroeconomic relations include New Keynesian Phillips Curves (NKPCs), Euler equations, and Taylor rules. IV-based limited-information inference on such macroeconomic relations has arguably proven popular because there is no requirement that parts of the model other than the specified relation itself be necessarily true to conduct valid inference. In virtually all applications, the relation is assumed to contain an additive error term that is shown (e.g., by the assumption of Rational Expectations (RE)) or primitively assumed to be uncorrelated with predetermined variables excluded from the specified relation. This makes any predetermined variable a valid IV.

As documented extensively in the existing literature, using IVs to conduct limited-information inference on such macroeconomic relations often runs into issues related to weak identification. This occurs when the variation in the IVs is only able to explain a small portion of the variation of the endogenous variables.\footnote{For the case of NKPCs, see \citet{Kapetanios:2015fp, Mirza:2014wd, Mavroeidis:2014ge, Kleibergen:2009do, Dufour:2006bh, Ma:2002wl}, for the case of Euler equations see \citet{Ascari:2019ch, Kleibergen:2005ug, Yogo:2004vm, Stock:2000vt}, for the case of Taylor rules see \citet{Mirza:2014wd, Mavroeidis:2010bv}.} This problem is especially pronounced when the analysis is restricted to using only a few variables to forecast the endogenous variables, a restriction that arises when using IV methods that treat the number of IVs as fixed relative to the sample size. Since any predetermined variable is a valid (if not very informative) IV, this naturally raises the question of which IVs to choose out of the very many available ones. 
 
The limited literature that seeks to formally address the high dimensionality of the available IVs in such macroeconomic settings is primarily motivated by the potential inefficiency of using IVs selected in an ad-hoc way \citep{Berriel:2019vp, Bayar:2018er, Berriel:2016hj, Mirza:2014wd, Kapetanios:2015fp}. Through simulations and/or empirical applications, these studies find smaller confidence sets than the ones implied by IVs traditionally used in the past.  Although this evidence is certainly suggestive, it should be noted that formal efficiency claims rely on conditions that are not easily verifiable in practice.\footnote{For instance, factor-based approaches to reduce the dimensionality of the IVs likely work well if there is a factor structure, and if whatever explains most of the variation in the IVs, also explains (a good portion of) the variation of the endogenous variables. While the former may be made plausible through certain tests, the latter remains an assumption the researcher has to make. Similarly, a LASSO-based selection of IVs works well only under the assumption that the relation between the endogenous variables and the candidate IVs is sufficiently sparse.}
 
Rather than being motivated by such efficiency concerns, this paper revisits the question of high-dimensional limited-information inference because some types of formal or intuitive regularisation can lead to invalid inference, even if weak-IV robust methods are used after regularisation. This is due to what \citet{Chernozhukov:2015iq} call the `endogeneity bias', which arises when variables are selected on the basis of their in-sample correlation with a model's error terms. 

The first contribution of this paper consists in illustrating how improper selection of IVs can lead to invalid inference in the context of limited-information inference on a standard NKPC. I do this by extending the simulations in \citet{Mavroeidis:2014ge} to the more realistic case where the econometrician does not have oracle knowledge on which IVs are the relevant ones, but rather has to choose amongst the very many available IVs. I consider different IV selection techniques, and show that several of them result in substantially invalid inference. The example of the NKPC is chosen for the sake of concreteness and due to its popularity in the literature. The same concerns extend to any of the many cases in Macroeconomics where a given structural equation can be estimated with very many valid IVs.

As a second contribution, I propose a Sup Score test to conduct IV-based limited-information inference on single-equation Macroeconomic relations. Contrarily to other approaches in the literature, this statistic requires no assumption on the factor structure of the IVs, nor does it make any sparsity-type assumption that requires only a few of the very many IVs to be relevant, while allowing for a number of IVs that increases exponentially with the sample size. This test directly contributes to the (very) many weak IVs literature predominantly restricted to the cross-sectional case (see \citet{Mikusheva:2020uo, Crudu:2020bk, Belloni:2012kw, Anatolyev:2010kk}), and can find application well beyond the example of NKPCs considered in this paper. 

The third contribution consists in applying the selection procedures considered in the simulation section and the Sup Score test to conduct IV-based limited-information inference on a standard hybrid NKPC with 359 IVs on a sample of 179 observations. I find that both the IVs selected and the confidence set implied by the selection procedure that yields the worst size distortion in the simulations are similar to the IVs selected and the confidence set implied by the IVs traditionally used in the past. This suggests that the results previously reported in the literature may suffer from endogeneity bias, and that they hence may undercover the true parameter values. By contrast, the confidence sets implied by the Sup Score test are considerably wider. 

\textbf{Notation}. For any real number $a$, $\floor*{a}$ indicates the smallest integer $b$ such that $b \leq a$. For any two real numbers $c$ and $d$, $c \lesssim d$ if $c$ is smaller than or equal to $d$ up to a universal positive constant. The remaining notation follows standard conventions.

\textbf{Organisation of the paper}. Section \ref{Section-Model} introduces the model considered in this paper. Section \ref{Section-Methodology} outlines the methods used in this paper to conduct inference in the context of very many IVs. Section \ref{Section-Simulation} provides simulation-based evidence on the size and power of these methods. Section \ref{Section-Empirics} revisits inference on the US NKPC using very many IVs. Section \ref{Section-Conclusion} concludes. 

\section{Model \label{Section-Model}}

The structural equation I consider is the hybrid NKPC of \citet{Gali:1999tx},
\begin{equation}
	\label{Equation-PC-Hybrid} 
	 	\pi_t = c + \lambda s_t + \gamma_f\mathbb{E}_t\left[\pi_{t+1}\right] +\gamma_b\pi_{t-1} + u_t,
\end{equation}
where $\pi_t$ is the inflation rate, $s_t$ is the forcing variable, and $\lambda$, $c$, $\gamma_{f}$, and $\gamma_b$ are parameters of the model. $u_t$ is an unobserved disturbance term, which can be interpreted as a measurement error, or as a shock to inflation, such as a cost-push shock.

The identifying moment conditions can be derived within the framework of Generalised Instrumental Variable (GIV) estimation. In this approach, realised one-period-ahead inflation is substituted in for expected inflation. This means that Equation \eqref{Equation-PC-Hybrid} can be re-written as
\begin{equation*}
	\label{Equation-PC-Hybrid-GIV}
				\pi_t = c + \lambda s_t + \gamma_{f}\pi_{t+1} + \gamma_{b}\pi_{t-1}+ \underbrace{u_t - \gamma_{f}\left[\pi_{t+1} - \mathbb{E}_t[\pi_{t+1}]\right]}_{{\epsilon}_t}.
\end{equation*}

If it is further assumed that $\mathbb{E}_{t-1}[u_t] = 0$, the assumption of RE gives rise to the moment conditions
\begin{equation*}
	\label{Equation-MC-Basic}
	\mathbb{E}[{Z}_t{\epsilon}_t] = 0,
\end{equation*}
for any $k\times 1$ vector of predetermined variables ${Z}_t$. Due to the very large number of predetermined time series available, the dimension of $Z_t$ is comparable to or larger than the number of observations, $T$.

It should be noted that the example of NKPCs (including the particular specification chosen), and the assumption of RE are not central to two of the contributions of this paper. The same concerns relating to the endogeneity bias persist, and the same Sup Score test proposed below remains valid for the broad class of models defined by single-equation relations of the type
	\begin{equation}
		\label{Equation-General-Struct}
		 y = g( Y,  X, \theta) + \varepsilon,
	\end{equation}
	and moment equations given by	
	\begin{equation}
		\label{Equation-General-MC}
		\mathbb{E}[{Z}_t\varepsilon_t] = 0,
	\end{equation}	
where $ y$ is a $T\times 1$ vector, $g$ is a known real-valued function, $ Y$ is a $T\times p_1$ matrix of endogenous covariates, $ X$ is a $T\times p_2$ matrix of exogenous covariates, $Z$ is a $T\times k$ matrix of variables such that $k \geq p_1 + p_2$, $\theta$ is a $(p_1 + p_2)\times 1$ vector of coefficients, $p_1$ and $p_2$ are both fixed, and $ \varepsilon$ is a $T\times 1$ vector of error terms. This setup encompasses many popular applications in Macroeconomics, where $k$ is of the same magnitude or even larger than $T$, such as limited-information inference on NKPCs, Euler equations, and Taylor rules.

In particular, the NKPC considered in Equation \eqref{Equation-PC-Hybrid-GIV} can be mapped into the more general model in Equation \eqref{Equation-General-Struct} as follows. Since the NKPC is linear, the exogenous (predetermined) variables can be partialled out. Hence, $y = M_X\pi$,  $Y = M_X[s \text{ } \pi_{+1}]$, $M_X = I - X(X'X)^{-1}X'$, $ X = [{1}_{T\times 1} \text{ } \pi_{-1}]$, $g( Y,  X, \theta) = Y\theta$, $\theta = [\lambda, \gamma_f]'$, $\varepsilon = M_X\epsilon$,  $Z = M_X\tilde{Z}$, $\tilde{Z}$ is a $T\times (k-2)$ matrix of excluded IVs, $s, \pi_{+1}$, $\pi_{-1}$, and $\epsilon$ are the $T\times 1$ stacked vectors of $s_t$, $\pi_{t+1}$, $\pi_{t-1}$, and $\epsilon_t$, respectively. 

\section{Methodology \label{Section-Methodology}}

For all methods considered in this paper, confidence sets are constructed by inverting statistics that test the hypothesis
\begin{equation}
	\label{Equation-Null-Hypothesis}
			H_0: \theta = \theta_0 \text{  vs   } H_1: \theta \neq \theta_0.
\end{equation}

The $(1-\alpha)$ confidence set can be constructed by collecting the values of $\theta_0$ for which the null hypothesis in Equation \eqref{Equation-Null-Hypothesis} is not rejected at the $\alpha$ level of significance. For convenience, define ${\varepsilon}_0 \equiv  y - g( Y,  X, \theta_0)$.

\subsection{Post-Selection Low-Dimensional Inference \label{Section-Low-Dim}}

Most of the existing literature that conducts inference on relations of the form presented in Equation \eqref{Equation-General-Struct} using moment conditions of the type shown in Equation \eqref{Equation-General-MC} has employed methods that require the IVs to be low-dimensional. In the presence of very many IVs, these approaches can be seen as a two-step procedure. First, the IVs are selected. Second, a low-dimensional (weak-identification robust) method is applied with the selected IVs. The first step is usually not made explicit, and is often not given any attention, which makes it impossible to model this step accurately. In Section \ref{Section-Methodology-Selection}, I consider three different selection procedures that reasonably cover (in terms of their deleterious effect on subsequent inference) the range of selection procedures used in the previous literature. These are random selection, `crude thresholding', and LASSO. In Section \ref{Section-Methodology-S}, I outline the $S$ statistic of \citet{Stock:2000vt}, which forms the post-selection inferential method common to all three selection procedures considered in this paper.

 Before proceeding, it is helpful to gain some intuition as to why IV selection may lead to invalid IVs. For simplicity, suppose that all variables are endogenous (or that the model is linear and that the exogenous covariates have been partialled out). Consider the following projection (`first stage')
\begin{equation*}
	Y = Z\zeta + v,
\end{equation*}
where $\zeta$ is a $k\times 1$ vector of coefficients and $v$ is a $T\times 1$ vector of error terms. Consider the case of no identification at all, $\zeta = 0$, and a selection procedure that selects the IVs that are most highly correlated with the endogenous variables, $Y$. This amounts to selecting those IVs that are most highly correlated in-sample with the first-stage error term. By the endogeneity of the system, this means that those IVs most correlated with the error term, $\varepsilon$, will be selected, so that \emph{conditional on selection}, the IVs are no longer valid. This phenomenon carries over more broadly to cases of weak (but non-zero) identification as discussed in \citet{Hansen:2014ie}.

\subsubsection{The \citet{Stock:2000vt} $S$ Statistic \label{Section-Methodology-S}}

In this paper, the GMM-based $S$ statistic of \citet{Stock:2000vt} will be used for low-dimensional post-selection inference.\footnote{More powerful and computationally intensive (GMM-based) weak-identification robust methods could be used instead of the $S$ statistic (see \citet{Mirza:2014wd, Kleibergen:2009do}). Considering them instead of the $S$ statistic does not qualitatively affect the results of the simulations, while increasing their computational burden substantively. Furthermore, \citet[p. 165]{Mavroeidis:2014ge} state that amongst the different specifications for the NKPC they consider, the confidence sets implied by these more powerful methods are similar to the ones implied by the $S$ statistic.} Letting $k_s \geq p_1 + p_2$ denote the number of IVs selected, the $S$ statistic is given by $T$ times the value of the continuously updated GMM objective function given by
\begin{equation}
	\label{Equation-S-Stat}
	S(\theta_0) = T\varepsilon_T(\theta_0)'W_T(\theta_0)\varepsilon_T(\theta_0),
\end{equation}
where $\varepsilon_T(\theta_0) = T^{-1}\sum_{t = 1}^TZ_{st}\varepsilon_{0t}$, $Z_{st}$ is the $k_s\times 1$ vector containing the IVs selected, and $W_T(\theta_0)$ is the continuously updated $k_s\times k_s$ weight matrix that is a consistent estimator of the covariance matrix of the moment conditions of the selected IVs as in \citet{Kleibergen:2009do, Stock:2000vt}. Throughout, I use the heteroscedasticity and autocorrelation consistent (HAC) estimator of \citet{Newey:1987ua}. Under the null hypothesis in Equation \eqref{Equation-Null-Hypothesis} and the regularity conditions discussed in \citet{Stock:2000vt}, this statistic is asymptotically $\chi^2_{k_s}$. Whenever $p_2 \neq 0$ (i.e., there are exogenous covariates in the relation), the exogenous covariates can be concentrated out, to yield the concentrated $S$ statistic as in \citet[Theorem 3]{Stock:2000vt}.\footnote{In both the simulations and the empirical application below, the constant and the one-period lagged inflation are concentrated out.}

The $S$ statistic further recommends itself in this context because it allows for a straightforward test of the exclusion restrictions of the IVs. It may be hoped that any substantial bias caused by improper selection may be flagged in the form of a low $p$-value for the test of the null hypothesis that the IVs selected, $Z_s$, are uncorrelated with the structural error term, $\varepsilon$. To investigate this possibility further, in the simulations, I also evaluate the weak-identification robust Hansen test. This is given by the minimum value of the $S$ statistic in Equation \eqref{Equation-S-Stat}. Without making an assumption of strong identification, this statistic is asymptotically bounded by a $\chi^2_{k_s-p_2}$ distribution \citep[p. 178]{Mavroeidis:2014ge}, which provides a weak-identification robust critical value for the test of the overidentifying restrictions of the IVs selected.

\subsubsection{Ad-Hoc Selection of Instrumental Variables \label{Section-Methodology-Selection}}

Conducting inference with the $S$ statistic requires selecting a sufficiently small subset of $k_s$ IVs from the available $k$ IVs.\footnote{An often-used rule of thumb is to select $k_s$ to be of the order of magnitude of $T^{1/3}$. This rate result is motivated by the results in \citet{Andrews:2007bl}, and \citet{Newey:2009fs}, who show that this rate condition is sufficient for the case of independent data. Recently, fully weak-identification robust AR-type statistics have been developed that allow for the number of IVs to be of the order of magnitude of $T$ \citep{Mikusheva:2020uo, Crudu:2020bk, Anatolyev:2010kk}. However, all of these approaches treat the IVs as fixed, and are hence not applicable in the context of time series. \label{Footnote-What-Is-Big}} In most of the empirical studies on IV-based limited-information inference on macroeconomic relations, no explicit reason is given for choosing the $k_s$ IVs that are subsequently used for analysis. Often, the choice of IVs is simply motivated with reference to previous studies that used those IVs. It is hence impossible to model the choice of IVs of the previous literature accurately in a simulation exercise. As an (imperfect) approximation, I consider the following three selection procedures.

The first selection procedure involves randomly selecting $k_s$ IVs out of the $k$ available IVs. Since the selection of IVs is not informed by the data itself, this selection procedure is guaranteed to not violate the identifying moment conditions. 

The second selection procedure I consider will be referred to as crude thresholding. This involves first computing $p_1$ separate $k\times 1$ vectors containing the sample correlations between the endogenous variables and all the candidate IVs, sorting the IVs in descending order of correlation, and constructing the vector of IVs for post-selection inference by taking the union of the first $\floor*{k_s/p_1}$ entries in each of the vectors. By selecting the variables based on in-sample correlations, this selection procedure is likely to break the exclusion restriction of the IVs selected. Although (to my knowledge) this crude thresholding has not been applied to IV-based limited-information inference, more sophisticated versions of thresholding have been considered in the past (e.g., \citet[Appendix B]{Mirza:2014wd} and \citet{Bayar:2018er}).\footnote{The hard thresholding in \citet[Appendix B]{Mirza:2014wd} and \citet{Bayar:2018er} is not applicable in high-dimensional contexts, since OLS is infeasible when there are more variables than observations.}

The first two selection procedures (random selection and crude thresholding) arguably cover the extremes in terms of the effects IV selection can have on the validity of the IVs. Random selection provides the selection ideal, since it leaves the identifying moment conditions completely unaffected. However, particularly with reference to the traditional IVs often considered in the literature, it seems unlikely that random selection (over the very many available predetermined macroeconomic time series) led to choosing proximate lags of the endogenous variables as IVs. Indeed, given the persistence of most macroeconomic time series (and hence of the endogenous variables in any given application), it seems plausible that at least part of the motivation for considering proximate lags of the endogenous variables as IVs stems from their ability to usefully explain some of their in-sample variation. Suggestive evidence for this type of selection is also given by the fact that the IVs selected by crude thresholding in the empirical application in Section \ref{Section-Empirics} show substantial overlap with these traditional IVs.\footnote{See also the ranking of IVs based on $t$-values in \citet[Table B1]{Mirza:2014wd}.} Therefore, it seems likely that random selection and crude thresholding provide a suggestive lower and upper bound on the selection-induced bias that could underlie existing empirical applications.

The third selection procedure I consider is a LASSO-based selection of IVs. This is motivated by the recent increase in popularity of penalisation-based approaches to the (very) many IV problem (see \citet{Hansen:2014ie, Belloni:2012kw, Ng:2011eq}). Furthermore, LASSO-based approaches to IV selection have also been applied to the case of NKPCs in \citet{Berriel:2019vp, Berriel:2016hj}. Here, IVs are selected by solving a LASSO optimisation problem of the following form for each endogenous variable
\begin{equation*}
	\label{Equation-LASSO-Naive}
	\begin{aligned}
	\hat{\zeta}_r&=\underset{\zeta_r \in \mathbb R^k}{\text{ arg min } } \sum_{t=1}^T (Y_{rt} - \zeta_r'Z_t)^2 + \Lambda_r|\zeta_r|,\\
	\end{aligned}
\end{equation*}
where $Y_{rt}$ is the element in position $t$ of the $T\times 1$ vector $Y_r$ given by the $r^{th}$ column of $Y$, $\zeta_r$ for $r = 1, \dots, p_1$ is a $k\times 1$ vector, and $\Lambda_r > 0$ for $r = 1, \dots, p_1$ are scalar penalty parameters that are set such that $\hat{\zeta}_r$ has $\floor*{k_s/p_1}$ elements. The IVs selected are given by the IVs that have at least one corresponding non-zero entry in at least one of $\hat{\zeta}_r$ for $r = 1, \dots, p_1$. 

\subsection{A High-Dimensional Sup Score Test for Dependent Data}

Although the interplay between weak identification and high-dimensional IVs has recently received some attention (see \citet{Hansen:2014ie}), none of the currently available approaches are both robust to arbitrarily weak identifcation and applicable in a time-series context. Indeed, to the best of my knowledge, the only approach that is formally robust to arbitrarily weak identification in the presence of very many IVs is the Sup Score test of \citet{Belloni:2012kw}. The Sup Score test of \citet{Belloni:2012kw}, however, treats the IVs as fixed, and is hence not applicable in time-series contexts. In this section, I propose a Sup Score test that remains valid under high-dimensional dependent data using recent results of \citet{Zhang:2018hy, Zhang:2014vm}.

The Sup Score statistic I propose is given by
\begin{equation}
	\label{Equation-TSS}
	\mathcal{R} = \underset{1 \leq j \leq k}{\text{ max }} \left| \frac{1}{\sqrt T} {Z}_{j}'{\varepsilon}_0\right|.
\end{equation}

This can be seen as a non-studentised version of the \citet{Belloni:2012kw} Sup Score statistic, which in turn can be interpreted as an extension to high dimensions of the \citet{Anderson:1949tx} (AR) statistic. It also bears some resemblance to the non-studentised AR statistic proposed by \citet{Horowitz:2018to}.

The critical values for the test statistic in Equation \eqref{Equation-TSS} are computed using a block bootstrap. Let $l_T \equiv \floor*{T / b_T}$, where $b_T$ is the block length. Define the block sums
\begin{equation*}
	\hat{A}_{tj} = \sum_{l = (t-1)b_T + 1}^{tb_T}{Z}_{lj}{\varepsilon}_{0l} - \left\{\overbar{Z'\varepsilon_0}\right\}_j,	 \text{ for } t = 1, \dots, l_T,
\end{equation*}
where $\left\{\overbar{Z'\varepsilon_0}\right\}_j$ is the $j^{th}$ element of the $k\times 1$ vector $\frac{1}{T}\sum_{t = 1}^TZ_t\varepsilon_t$. Consider the bootstrap statistic given by
\begin{equation*}
	L_{\hat A} = \underset{1 \leq j \leq k}{\text{ max }} \frac{1}{\sqrt T}\left|\sum_{t = 1}^{l_T}\hat{A}_{tj}e_t\right|,
\end{equation*}
 where $\{e_t\}$ is a sequence of i.i.d. $\mathcal{N}[0, 1]$ random  variables. The critical value for a test of size $\alpha$ of Equation \eqref{Equation-Null-Hypothesis} is given by
 \begin{equation*}
 	c(\alpha) = \text{inf}\left\{\gamma \in \mathbb{R}:\mathbb{P}(L_{\hat A} \leq \gamma| \{{Z}_t{\varepsilon}_{0t}\}_{t = 1}^T) \geq 1-\alpha\right\}.
 \end{equation*}
 
The decision rule for testing the null hypothesis in Equation \eqref{Equation-Null-Hypothesis} at the $\alpha$ level of significance is given by
 \begin{equation*}
 	\text{Reject } H_0 \iff \mathcal{R} > c(\alpha).
 \end{equation*}
 
 I now turn to conditions that are sufficient to ensure that the test described above has correct size. 
 
 \begin{assumption} $\left.\right.$
 	\label{Assumption-Main}
 	 \begin{enumerate}[label = \roman*.]
 	 	\item $Z_t\varepsilon_t$ is a stationary time series that allows for the causal representation $Z_{t}\varepsilon_t = \mathcal{G}(\dots, u_{t-1}, u_t)$ for some measurable function $\mathcal G$, where $u_t$ are a sequence of mean-zero i.i.d. random variables. Furthermore, assume that $Z_{tj}\varepsilon_t = \mathcal{G}_j(\dots, u_{t-1}, u_t)$ for all $j = 1, \dots, k$, where $\mathcal{G}_j$ is the $j$th component of the map $\mathcal G$. \label{Assumption-Z-Rep}
 	 \item $\mathbb{E}[Z_t\varepsilon_t] = 0$, $\mathbb{E}[Z_{tj}^2\varepsilon_t^2] > 0$, and $\mathbb{E}[Z_{tj}^4\varepsilon_t^4] < \infty$ for all $j = 1, \dots, k$. \label{Assumption-Z}
 	\item $k \lesssim \text{exp}(T^b)$, $b_T  \lesssim T^{\tilde{b}}$ for $b < 1/15$, $4\tilde{b} + 7b < 1$, $\tilde{b} - 2b > 0$. \label{Assumption-Dim}
	\item $\underset{1 \leq j, h \leq k}{\text{ max }}\sum_{l = -\infty}^\infty |l| \mathbb{E}[|Z_{tj}\varepsilon_{t}Z_{t+l, h}\varepsilon_{t+l}|] = O(T^{\breve{b}})$, $\breve{b} < \tilde{b} - 2b$. \label{Assumption-Weird-MC}
	\item $\mathbb{E}[|\mathcal{G}_j(\dots, u_{t-1}, u_t) - \mathcal{G}_j(\dots, u^*_{-1}, u^*_0, u_1, \dots, u_t)|^q] \leq C \rho^t$, for some $0 <\rho < 1$, and some positive constant $C$, where $q \geq 4$, and $\{u^*_t\}$ are i.i.d. copies of $\{u_t\}$. \label{Assumption-UGMC}
\end{enumerate}

\end{assumption}

Assumption \ref{Assumption-Main}.\ref{Assumption-Z-Rep} requires the product of the IVs and the error terms to be stationary, and have some causal representation. Assumption \ref{Assumption-Main}.\ref{Assumption-Z} makes weak assumptions on the moments of the data, and includes the identifying moment condition. In practice, I standardise the IVs in-sample to ensure that the test is invariant to the scaling of IVs. Assumption \ref{Assumption-Main}.\ref{Assumption-Dim} bounds the degree of high dimensionality permitted and the size of the block bootstraps. Although the restriction on the dimensionality ($b < 1/15$) is stronger than the ones usually encountered in the independent case (see \citet{Deng:2020wk, Belloni:2012kw}), it still allows for very many IVs compared to the sample size. Assumption \ref{Assumption-Main}.\ref{Assumption-Weird-MC} imposes restrictions on the correlation of the product of the IVs with the error term across different points in time. Assumption \ref{Assumption-Main}.\ref{Assumption-UGMC} imposes a (uniform) Geometric Moment Contraction (GMC) restriction on the product of the IVs and the error terms as in \citet{Wang:2019ta}. The GMC requires that the process under consideration have a sufficiently `short memory'. Processes that obey such a condition include (under suitable assumptions) standard linear processes (e.g., standard vector autoregressions and Volterra processes) as well as several nonlinear processes (e.g., autoregressive models with conditional heteroscedasticity, random coefficient autoregressive models, and exponential autoregressive models).  I refer to \citet{Wang:2019ta, Zhang:2018hy, Chen:2016fh, Zhang:2014vm, Wu:2005uh, Hsing:2004gma} and the references therein for a discussion of the different processes that obey such a condition.

No assumption on the first stage (i.e., the relationship between $Y$ and $Z$) has to be made. This means that the proposed Sup Score test is uniformly valid over all (finite) values of the coefficient on the IVs in the first stage (including arbitrarily weak identification). This also means that no restriction on the factor or sparsity structure of the first stage has to be imposed. The lack of assumptions on the first stage also implies that the Sup Score test does not suffer from any `missing IV problem' (see also \citet{Dufour:2009bu}).

Whether these conditions are satisfied in any given macroeconomic application depends on the error terms (i.e., the structural equation), and on the properties of the excluded IVs. Example \ref{Example-Z-UGMC} shows that under suitable assumptions on the error term that encompass, amongst others, some popular assumptions made in the literature on NKPCs (e.g., \citet{Dufour:2006bh}), it is only required that the IVs satisfy a GMC condition. This is attractive in the context of limited-information inference, since the researcher only has to assume that the IVs belong to one of the many processes that have been shown to obey such a condition, without having to take a stance on the particular process. 

\begin{example}
	\label{Example-Z-UGMC}
	
	Assume that $\varepsilon_t$ is i.i.d. across $t$, $\mathbb{E}[\varepsilon_t] = 0$,  $\mathbb{E}[\varepsilon_t^2] > 0$, and $\mathbb{E}[\varepsilon_t^4] < \infty$. Assume that $Z_t$ is a stationary time series and allows for the causal representation $Z_t = \mathcal{F}(\dots, v_{t-1}, v_t), Z_{tj} = \mathcal{F}_j(\dots, v_{t-1}, v_t)$ for some measurable function $\mathcal F$, where $v_t$ are a sequence of mean-zero i.i.d. random variables (independent of $\varepsilon_t$). Assume further that $\mathbb{E}[Z_{tj}^2] >0$ and $\mathbb{E}[Z_{tj}^4] < \infty$ for all $j = 1, \dots, k$. Assume that the conditions on the dimensionality of the IV problem in Assumption \ref{Assumption-Main}.\ref{Assumption-Dim} are satisfied. Further, assume that $Z_t$ satisfies:	
	\begin{equation}
		\label{Equation-Z-UGMC}
		\mathbb{E}[|Z_{tj} - \mathcal{F}_j(\dots, v^*_{-1}, v_0^*, v_1, \dots, v_t)|^4] < \tilde{C}{\tilde{\rho}}^{t}
	\end{equation}
	where $\{v^*_t\}$ are i.i.d. copies of $\{v_t\}$, $\tilde{C}$ is some constant, and $0 <\tilde{\rho}
	 < 1)$. Then the conditions in Assumption \ref{Assumption-Main}. hold. 
	
	\begin{proof}
	See Appendix \ref{Appendix-Example}.	
	\end{proof}
\end{example}

I now state the main theoretical result of this paper, which ensures that the approach proposed controls the size of the test.\footnote{It should be noted, however, that--similarly to other sup-based test statistics, such as in \citet{Chernozhukov:2018ema, Belloni:2012kw}--the above approach is not efficient. This is to be expected, given the weak assumptions made on (the structure of) the IVs. The (finite-sample) power properties of the above approach will be investigated in the simulation section below. The results show that it has non-trivial power.} 

\begin{theorem}
\label{Theorem-Main}
	Under Assumption \ref{Assumption-Main}. and the null hypothesis in Equation \eqref{Equation-Null-Hypothesis},
	\begin{equation*}
		\underset{T\to \infty}{\text{ lim }}\mathbb{P}(\text{Reject } H_0) \leq \alpha. 
	\end{equation*}
\end{theorem}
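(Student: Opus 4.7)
The plan is a three-step high-dimensional Gaussian approximation argument, of the same general shape as the Chernozhukov-Chetverikov-Kato program but with the time-series Gaussian approximations of \citet{Zhang:2018hy, Zhang:2014vm} replacing the independent-data version. Under $H_0$, $\varepsilon_0 = \varepsilon$, so Assumption \ref{Assumption-Main}.\ref{Assumption-Z} gives $\mathbb{E}[Z_t\varepsilon_t] = 0$, and hence $\mathcal{R} = \max_{1\le j\le k}|T^{-1/2}\sum_{t=1}^T Z_{tj}\varepsilon_t|$ is the coordinate-wise maximum of a partial sum of a stationary, mean-zero, high-dimensional process that is causal (Assumption \ref{Assumption-Main}.\ref{Assumption-Z-Rep}) and geometrically moment-contracting (Assumption \ref{Assumption-Main}.\ref{Assumption-UGMC}). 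The goal is to show that both $\mathcal{R}$ and the bootstrap surrogate $L_{\hat A}$ are close in Kolmogorov distance to $\|G\|_\infty$, where $G \sim \mathcal{N}(0,\Sigma)$ and $\Sigma$ is the long-run covariance of $Z_t\varepsilon_t$; combined with the anti-concentration inequality for maxima of Gaussian vectors, this will yield $\mathbb{P}(\mathcal{R} > c(\alpha)) \to \alpha$, which proves the theorem.

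First I would invoke the high-dimensional Gaussian approximation for dependent data of \citet{Zhang:2018hy, Zhang:2014vm}. Their results require exactly: a causal representation, at least four finite moments, geometric physical-dependence decay, and rate conditions linking $\log k$, the block length $b_T$, and $T$. These are supplied by parts \ref{Assumption-Z-Rep}, \ref{Assumption-Z}, \ref{Assumption-UGMC}, and \ref{Assumption-Dim} of Assumption \ref{Assumption-Main}, with the specific inequalities $b<1/15$, $4\tilde b + 7b < 1$ and $\tilde b - 2b > 0$ chosen precisely to meet those rate conditions. This gives
\begin{equation*}
\sup_{x\ge 0}\bigl|\mathbb{P}(\mathcal{R}\le x) - \mathbb{P}(\|G\|_\infty\le x)\bigr| \to 0.
\end{equation*}

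Second, I would prove a conditional Gaussian approximation for the bootstrap. Given the data, $L_{\hat A} = \max_j |T^{-1/2}\sum_{t=1}^{l_T}\hat A_{tj}e_t|$ is exactly the max of a mean-zero Gaussian vector with covariance $\widehat\Sigma = T^{-1}\sum_{t=1}^{l_T}\hat A_t \hat A_t'$, a centered block-based estimator of $\Sigma$. Under the GMC condition and the tail cross-covariance bound in Assumption \ref{Assumption-Main}.\ref{Assumption-Weird-MC}, entrywise consistency $\|\widehat\Sigma - \Sigma\|_\infty = o_p(1)$ follows with a rate controlled by $b_T/T$ and the long-range-dependence exponent $\breve b$. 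Feeding this into the Gaussian comparison lemma of Chernozhukov-Chetverikov-Kato gives
\begin{equation*}
\sup_{x\ge 0}\bigl|\mathbb{P}(L_{\hat A}\le x\mid \{Z_t\varepsilon_{0t}\}_{t=1}^T) - \mathbb{P}(\|G\|_\infty\le x)\bigr| \to_p 0,
\end{equation*}
so the data-dependent quantile $c(\alpha)$ converges in probability to the $(1-\alpha)$-quantile of $\|G\|_\infty$. Applying anti-concentration once more to translate quantile convergence into probability convergence yields the theorem.

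The main obstacle will be the second step: bookkeeping the rate at which $\widehat\Sigma$ concentrates on $\Sigma$ uniformly over the $k^2$ entries, and then reconciling that rate with the anti-concentration modulus $\sqrt{\log k}$. The bias from using sample-mean-centered block sums rather than truly centered terms contributes an $O(b_T/T)$ term, while the variance of $\widehat\Sigma$ picks up a factor polynomial in $b_T$ through Assumption \ref{Assumption-Main}.\ref{Assumption-Weird-MC}; the joint constraints $4\tilde b + 7b < 1$ and $\tilde b - 2b > 0$ in Assumption \ref{Assumption-Main}.\ref{Assumption-Dim} are tailored precisely to make both the Gaussian approximation error and the bootstrap covariance estimation error negligible relative to $(\log k)^{-1/2}$, so once these inequalities are used correctly everything collapses into the CCK-style conclusion.
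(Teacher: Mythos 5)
Your plan is correct and follows essentially the same route as the paper: under $H_0$ the statistic reduces to the max of a mean-zero partial sum, the time-series Gaussian approximation of \citet{Zhang:2018hy} gives the distributional approximation, and the block-multiplier bootstrap quantile is validated via \citet{Zhang:2014vm}. The only difference is one of delegation --- where you propose to establish $\|\widehat\Sigma-\Sigma\|_\infty=o_p(1)$ and run the Gaussian comparison and anti-concentration steps by hand, the paper simply verifies the rate conditions of Theorems 2.1 (ZC18) and 4.2/5.1 (ZC14), which package exactly that argument.
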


\begin{proof}
	
	See Appendix \ref{Appendix-Proof}.

\end{proof}

 Theorem \ref{Theorem-Main} makes it possible to construct confidence sets by inverting the test as outlined above.

\section{Simulations \label{Section-Simulation}}

The simulations presented in this section serve a twofold purpose. First, I use the simulations to study how improper selection of IVs can lead to problematic post-selection inference. Second, I use the simulations to illustrate the asymptotic validity of the Sup Score test established in the section above, as well as its finite-sample power properties. Taken together, the simulations hence motivate and further justify applying the Sup Score test proposed in Section \ref{Section-Model} in practice.\footnote{Due to the focus of this paper on the bias introduced by IV selection, I do not consider the factor-based approaches of \citet{Kapetanios:2015fp, Mirza:2014wd}. The substantial biases caused by the improper selection of a small number of IVs can also serve to motivate the use of such factor methods. However, the factor-based GMM approach in \citet{Mirza:2014wd} seems to treat the number of IVs as fixed (and does not provide formal conditions for validity), while the factor AR statistic of \citet{Kapetanios:2015fp} is only applicable in a high-dimensional context if a sufficiently strong factor structure is assumed. In contrast, the Sup Score test proposed in this paper remains valid in high-dimensional contexts regardless of the factor or sparsity structure of the IVs.}

The simulations in this paper are based on the approach in \citet{Mavroeidis:2014ge}. The central difference is that rather than modelling the econometrician as having perfect knowledge of the relevant IVs, and incorrectly employing methods that are not robust to weak identification, I model the econometrician as using exclusively weak-identification robust methods, but not knowing which IVs correspond to the truly relevant ones. Given the extensive literature that pointed out that NKPCs can suffer from weak identification, this setup seems closer to the estimation problem that an econometrician is likely to face.

I base my simulations on the simplest possible specification considered in \citet{Mavroeidis:2014ge}. This involves imposing the restriction $\gamma_b + \gamma_f = 1$ (which is known to the econometrician) and setting $c = 0$ (which is not known to the econometrician), so that the NKPC can be re-written as
\begin{equation}
	\label{Equation-NKPC-Sim}
	(1-\gamma_f)\Delta\pi_t = \lambda s_t + \gamma_f\mathbb{E}[\Delta\pi_{t+1}]  + \epsilon_t.	
\end{equation}

I embed this NKPC into a dynamic system by specifying that the reduced-form dynamics of the forcing variable and inflation follow a VAR model given by
\begin{equation}
	\label{Equation-RF-VAR}
	\begin{aligned}
	\begin{bmatrix}
		\pi_t\\
		s_t\\
		f_t
	\end{bmatrix} &= \begin{bmatrix}
		a_{11} & a_{12} & a_{13}\\
		a_{21} & a_{22} & a_{23}\\
		a_{31} & a_{32} & a_{33}\\
	\end{bmatrix}\begin{bmatrix}
		\pi_{t-1}\\
		s_{t-1}\\
		f_{t-1}
	\end{bmatrix} + \begin{bmatrix}
		u_{1t}\\
		u_{2t}\\
		u_{3t}
	\end{bmatrix},
	\end{aligned}
\end{equation}
where
\begin{equation*}
	\begin{bmatrix}
		u_{1t}\\
		u_{2t}\\
		u_{3t}
	\end{bmatrix}\overset{i.i.d.}{\sim} \mathcal{N}\left[0, \begin{bmatrix}
 	\omega_{11} & \omega_{12} & \omega_{13}\\
 	\omega_{21} & \omega_{22} & \omega_{23}\\
 	\omega_{31} & \omega_{32} & \omega_{33}\\
 \end{bmatrix}\right],
\end{equation*}
and $f_t$ is a scalar factor variable. All coefficients except for $a_{11}, a_{12}$, and $a_{13}$ have to be calibrated. The coefficients $a_{11}, a_{12}$, and $a_{13}$ are backed out of the NKPC based on the \citet{Anderson:1985wr} algorithm.

High dimensionality of the IVs is introduced by specifying that there exists an $m\times 1$ vector of variables $Q_t$ that follow the process given by
\begin{equation}
	\label{Equation-Q}
	Q_t = \xi f_t + u_{4t}, u_{4t} \overset{i.i.d.}{\sim} \mathcal{N}\left[0, I_{m}\right],
\end{equation}
where $\xi$ is an $m\times 1$ vector of factor loadings.

The econometrician conducts inference on $\lambda$ and $\gamma_f$ within the GIV and RE framework,
\begin{equation}
	\label{Equation-Estimate-Econometrician}
	\begin{aligned}
	\Delta \pi_t &= c + \lambda s_t + \gamma_f(\pi_{t+1}-\pi_{t-1}) + \epsilon_t,\\
	\mathbb{E}[Z_{st}\epsilon_t] &= 0,
	\end{aligned}
\end{equation}
where the variables are defined as in Section \ref{Section-Model} and Section \ref{Section-Methodology}. The econometrician does not observe the factor itself, but only observes the forcing variable and inflation, as well as the $m$ variables in $Q_t$. In this setup, $Z_{st}$ is a subset of the available IVs given by $Z_t = [1, \pi_{t-1}, s_{t-1}, Q_{t-1}']'$ that always includes a constant (since it is specified in the structural equation the econometrician considers).\footnote{For the case of post-selection inference based on the $S$ statistic, the constant is concentrated out. For the case of the Sup Score statistic, it is partialled out.}

This setup recommends itself for two reasons. First, it constitutes a minimal departure from popular simulations in the existing literature. This ensures that any reported results are not an artefact of a particularly uncharitable setup.\footnote{It is, for instance, straightforward to include the variables in $Q_t$ directly in the reduced-form VARs. However, the results from such a DGP are very sensitive to the particular calibration of the parameters chosen.} Second, it ensures the existence of a sufficiently small set of (excluded) `oracle IVs' (given by $\pi_{t-1}, s_{t-1}, f_{t-1}$) without necessarily imposing a sparse setup on the observed first-stage projection (although it can be imposed by setting $a_{13} = a_{23} = \omega_{13} = \omega_{31} = \omega_{23} = \omega_{32} = 0$ or simply $\xi = 0$).\footnote{Ensuring a sufficiently sparse set of `oracle IVs' further motivates considering only a single lag of a single factor.} The former is desirable because it allows for a comparison of the different ad-hoc inference procedures relative to the most efficient approach (conditional on using the $S$ statistic). The latter is desirable because it allows for a more general (and perhaps more realistic, see \citet{Giannone:2018uv}) approach to modelling the first stage. This setup is able to achieve both a sparse (unobserved) oracle first stage and an observed first stage that is not necessarily sparse because the elements of $Q_{t-1}$ that have a non-zero corresponding entry in $\xi$ will contain some relevant variation for identification, due to the dependence of the endogenous variables $[\pi_{t+1} - \pi_{t-1}, s_t]'$ on $f_{t-1}$. Based on the setup above, it is possible to derive two different concentration parameters ($\mu^2_O$ and $\mu^2_E$) that reflect the strength of identification in the sparse unobserved oracle first stage and the observed first stage. The details are given in Appendix \ref{Appendix-Conc-Par}.

The calibrations are as follows. Throughout, I set $\gamma_f = 0.8$, $\lambda = 0.05$, $T = 100$, and $\omega_{11} = 0.07$, $\omega_{12} = \omega_{21} = 0.03$, $\omega_{22} = 0.7$, $\omega_{13} = \omega_{31} = \omega_{23} = \omega_{32} = 0$, and $\omega_{33} = 0.4$ (see also \citet{Mavroeidis:2014ge}). The results are not sensitive to this choice of covariance matrix, and this setup makes it possible to create a perfectly sparse observed first stage by setting $a_{23} = 0$. For simplicity, I set $a_{31} = a_{32} = 0$, so that the factor structure follows an autoregressive process with coefficient given by $a_{33} = 0.7$ (the results do no change appreciably if this is relaxed or a different choice for $a_{33}$ is considered). I set $m = 200$ and $\xi_{q} =  \tau(-1)^q\log\left((q+1)^2/mq\right)$ for $q = 1, \dots, m$ and $\tau = 0.05$. This is meant to provide a deterministic calibration that balances positive and negative, as well as large and small coefficients. The small value chosen for $\tau$ ensures that the information on the factor contained in the observed variables is sufficiently diluted, and that there is some interesting variation in the informational content of the unobserved oracle first stage and the one actually observed.\footnote{I refer to Appendix \ref{Appendix-Conc-Par} for more details on this. The derivations also show that choosing small values of $\tau$ has a similar effect to choosing a larger term for the variance of the errors in Equation \eqref{Equation-Q}.} The results are unaffected by different choices of $\xi$ or $\tau$. For all selection procedures, I force the selection of $k_s = 4$ IVs to ensure that the first stage is not overfitted, which again ensures that any distortions in inference are attributable to the selection step itself. For the $S$ statistic, I set the lag-length for the \citet{Newey:1987ua} HAC variance estimator to 4. For the Sup Score test proposed above, I set the block size to $b_T = 4$ and the bootstrap replications to $500$. I allow $a_{21}$, $a_{22}$, and $a_{23}$ to take on different values. The coefficient $a_{23}$ controls how informative the factor is in predicting the endogenous variables, and by extension how informative the variables in $Q_{t-1}$ are.

Table \ref{Table-Sim-Results} shows the size of the $S$ and Sup Score statistic following the different selection procedures outlined above for a test with nominal size 10\%. The calibrations chosen ensure that a broad range of identification strength and sparsity structures are considered. The first panel for $a_{23}$ corresponds to the perfectly sparse first stage where none of the variables in $Q_{t-1}$ are informative IVs. As a consequence, the concentration parameter of the unobserved oracle first stage is the same as the one that is observed. The second and third panel increase the dependence of the two endogenous variables on the unobserved factor. Due to the dense calibration of $\xi$, this means that all of the IVs observed by the econometrician are at least somewhat informative. Since the variables in $Q_t$ contain noisy information on the unobserved factor, the concentration parameter in the observed first stage will now be lower than the concentration parameter of the unobserved oracle first stage. As expected, the oracle IVs yield correct, if somewhat conservative, size. Since random selection does not make use of any correlations present in the actual data, the $S$ statistic with randomly selected IVs also yields correct size. The results for crude thresholding and the LASSO suggest that in all cases size is not controlled, although the distortions appear to be somewhat milder for the LASSO. The results for the Sup Score test proposed in this paper show that the test controls for size regardless of the DGP considered.

Table \ref{Table-Sim-Results} also reports the rejection frequency of a two-step approach that tests the null hypothesis at a given level of significance only if the robust test of overidentifying restrictions fails to reject the hypothesis of exogeneity for the IVs selected at that level of significance. This is a very conservative approach. Indeed, when faced with evidence that the selected IVs may be endogenous, rather than abandoning the analysis altogether, it seems more likely that the econometrician will proceed to select other IVs, potentially worsening the endogeneity bias. Even in this conservative approach, crude thresholding fails to control for size. LASSO selection followed by this two-step approach appears to control for size. These results suggest that while the test of overidentifying restrictions can help mitigate some of the endogeneity bias introduced by improper selection, it is unable to fully remove it.

Figure \ref{Figure-Sim-Results-Power} shows the power of the different approaches. I present the results for the case where $a_{21} = a_{22} = a_{23} = 0.450$. The results are similar for other calibrations. The map traced out by the oracle IVs corresponds to the most powerful procedure possible (conditional on exclusively using the $S$ statistic) that controls for size. The results show that randomly selecting IVs yields no power. This is unsurprising, given that in this setup the first stage is sparse, so that random selection predominantly selects not very informative IVs. The power heatmaps for crude thresholding and the LASSO have a similar shape to the oracle heatmaps. However, for certain parts of the parameter space considered, the rejection frequency of these procedures is substantially higher than the one of the oracle test. Conditional on using the same test post-selection, both crude thresholding and the LASSO can be at most as powerful as the test that directly uses the oracle IVs. Therefore, this excess rejection frequency is spurious, which in practice would translate to small confidence sets. The power heatmaps for the Sup Score test show that the Sup Score test has non-trivial power.

\begin{table}[H]
\scriptsize
\caption{Simulation results: size.}\label{Table-Sim-Results}
\centering
\begin{tabular}{ccccccccccc}
\toprule

& & \multicolumn{9}{c}{$a_{23} = 0.000$}\\
\cmidrule{2-11}

&  &\multicolumn{3}{c}{ $a_{21} = 0.000$ }  & \multicolumn{3}{c}{ $a_{21} =0.200$ }  & \multicolumn{3}{c}{ $a_{21} =0.450$ }\\

& $a_{22}$& $ 0.000$ &$0.200$ & $0.450$ &  $ 0.000$ &$0.200$ & $0.450$ & $ 0.000$ &$0.200$ & $0.450$\\

\cmidrule{2-11}

& $\mu^2_O$ &0.000&4.082&24.175&0.000&4.070&23.938&0.000&4.040&23.393\\
& $\mu^2_E$ &0.000&4.082&24.175&0.000&4.070&23.938&0.000&4.040&23.393\\

\cmidrule{2-11}

\multirow{2}*{Oracle} & R.F. &0.070&0.042&0.046&0.062&0.064&0.045&0.069&0.059&0.046\\
& T.S. &0.058&0.037&0.038&0.057&0.055&0.041&0.060&0.055&0.039\\
\multirow{2}*{Random} & R.F. &0.107&0.125&0.123&0.114&0.115&0.115&0.112&0.120&0.135\\
& T.S. &0.100&0.118&0.112&0.108&0.107&0.111&0.100&0.112&0.124\\
{Crude} & R.F. &0.445&0.421&0.428&0.417&0.433&0.429&0.396&0.419&0.440\\
{Thresholding} & T.S. &0.200&0.179&0.160&0.186&0.191&0.177&0.183&0.223&0.150\\
\multirow{2}*{LASSO} & R.F. &0.199&0.208&0.216&0.204&0.198&0.220&0.189&0.219&0.230\\
& T.S. &0.129&0.130&0.095&0.124&0.114&0.094&0.130&0.147&0.088\\
\multirow{2}*{Sup Score} & R.F. &0.030&0.027&0.037&0.033&0.038&0.035&0.035&0.018&0.033\\

& T.S. &$-$ & $-$ & $-$ & $-$ & $-$ & $-$ & $-$ & $-$ & $-$\\

\toprule

& & \multicolumn{9}{c}{$a_{23} = 0.200$}\\

\cmidrule{2-11}

&  &\multicolumn{3}{c}{ $a_{21} = 0.000$ }  & \multicolumn{3}{c}{ $a_{21} =0.200$ }  & \multicolumn{3}{c}{ $a_{21} =0.450$ }\\

& $a_{22}$& $ 0.000$ &$0.200$ & $0.450$ &  $ 0.000$ &$0.200$ & $0.450$ & $ 0.000$ &$0.200$ & $0.450$\\

\cmidrule{2-11}

& $\mu^2_O$ &11.198&19.501&48.211&12.137&21.030&49.983&13.431&23.283&53.527\\
& $\mu^2_E$ &6.638&14.262&38.910&7.144&15.044&38.117&7.827&16.092&36.214\\
\cmidrule{2-11}

\multirow{2}*{Oracle} & R.F. &0.059&0.051&0.058&0.057&0.053&0.042&0.063&0.035&0.056\\
& T.S. &0.041&0.041&0.047&0.042&0.042&0.033&0.043&0.026&0.045\\
\multirow{2}*{Random} & R.F. &0.108&0.106&0.107&0.121&0.114&0.100&0.125&0.112&0.129\\
& T.S. &0.102&0.103&0.096&0.113&0.107&0.093&0.113&0.109&0.116\\
{Crude} & R.F. &0.413&0.411&0.406&0.409&0.411&0.400&0.416&0.365&0.405\\
{Thresholding} & T.S. &0.176&0.172&0.171&0.198&0.192&0.193&0.208&0.197&0.219\\
\multirow{2}*{LASSO} & R.F. &0.188&0.194&0.223&0.186&0.196&0.193&0.192&0.179&0.195\\
& T.S. &0.112&0.114&0.108&0.113&0.127&0.090&0.130&0.118&0.119\\
\multirow{2}*{Sup Score} & R.F. &0.024&0.032&0.032&0.028&0.033&0.032&0.028&0.033&0.026\\
& T.S. &$-$ & $-$ & $-$ & $-$ & $-$ & $-$ & $-$ & $-$ & $-$\\

\toprule

& & \multicolumn{9}{c}{$a_{23} = 0.450$}\\

\cmidrule{2-11}

&  &\multicolumn{3}{c}{ $a_{21} = 0.000$ }  & \multicolumn{3}{c}{ $a_{21} =0.200$ }  & \multicolumn{3}{c}{ $a_{21} =0.450$ }\\

& $a_{22}$& $ 0.000$ &$0.200$ & $0.450$ &  $ 0.000$ &$0.200$ & $0.450$ & $ 0.000$ &$0.200$ & $0.450$\\  

\cmidrule{2-11}

& $\mu^2_O$ &56.388&81.202&101.025&61.410&83.525&101.183&66.590&83.989&107.289\\
& $\mu^2_E$ &29.100&42.505&52.008&30.144&41.690&47.471&31.006&40.080&42.374\\

\cmidrule{2-11}

\multirow{2}*{Oracle} & R.F. &0.084&0.067&0.056&0.067&0.063&0.081&0.056&0.065&0.087\\
& T.S. &0.060&0.051&0.042&0.049&0.048&0.065&0.037&0.052&0.072\\
\multirow{2}*{Random} & R.F. &0.107&0.131&0.101&0.115&0.105&0.123&0.131&0.125&0.122\\
& T.S. &0.097&0.121&0.095&0.108&0.102&0.112&0.119&0.117&0.105\\
{Crude} & R.F. &0.372&0.416&0.391&0.383&0.359&0.395&0.381&0.395&0.398\\
{Thresholding} & T.S. &0.186&0.203&0.180&0.182&0.180&0.210&0.191&0.204&0.292\\
\multirow{2}*{LASSO} & R.F. &0.201&0.215&0.189&0.201&0.204&0.195&0.186&0.222&0.212\\
& T.S. &0.127&0.122&0.104&0.118&0.125&0.119&0.117&0.138&0.189\\
\multirow{2}*{Sup Score} & R.F. &0.016&0.026&0.032&0.024&0.027&0.037&0.030&0.041&0.037\\
& T.S. &$-$ & $-$ & $-$ & $-$ & $-$ & $-$ & $-$ & $-$ & $-$\\

\bottomrule
\multicolumn{11}{l}{\emph{Notes}:}\\
\multicolumn{11}{l}{R.F. denotes the rejection frequency.}\\
\multicolumn{11}{p{0.9\textwidth}}{T.S. denotes the rejection frequency where a null hypothesis is rejected only if the robust test of overidentifying restrictions fails to reject the selected IVs' exogeneity.}\\
\multicolumn{11}{l}{Nominal test size: 10\%.}\\
\multicolumn{11}{l}{1,000 Monte Carlo replications.}\\
\end{tabular}
\end{table}

\begin{figure}[H]
    \centering
    	\begin{subfigure}[b]{0.25\textwidth}
    		\includegraphics[width=\textwidth]{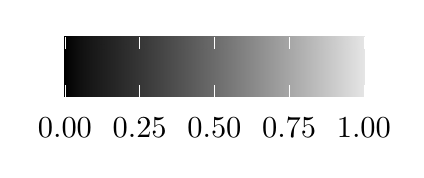}
	    \end{subfigure}
    
       \begin{subfigure}[b]{0.5\textwidth}
    	\centering
        \includegraphics[width=\textwidth]{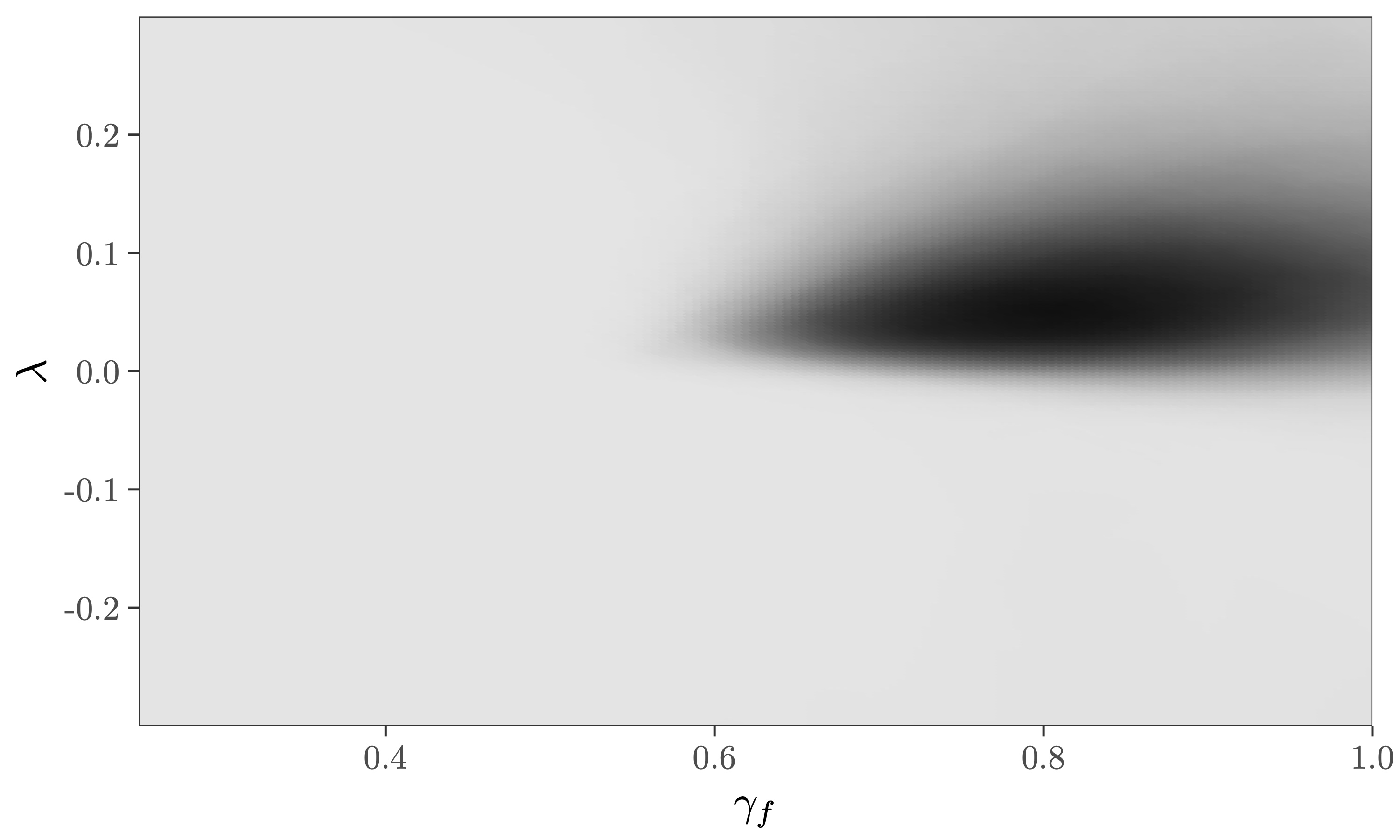}
        \caption{Oracle}
    \end{subfigure}%
    \begin{subfigure}[b]{0.5\textwidth}
    	\centering
        \includegraphics[width=\textwidth]{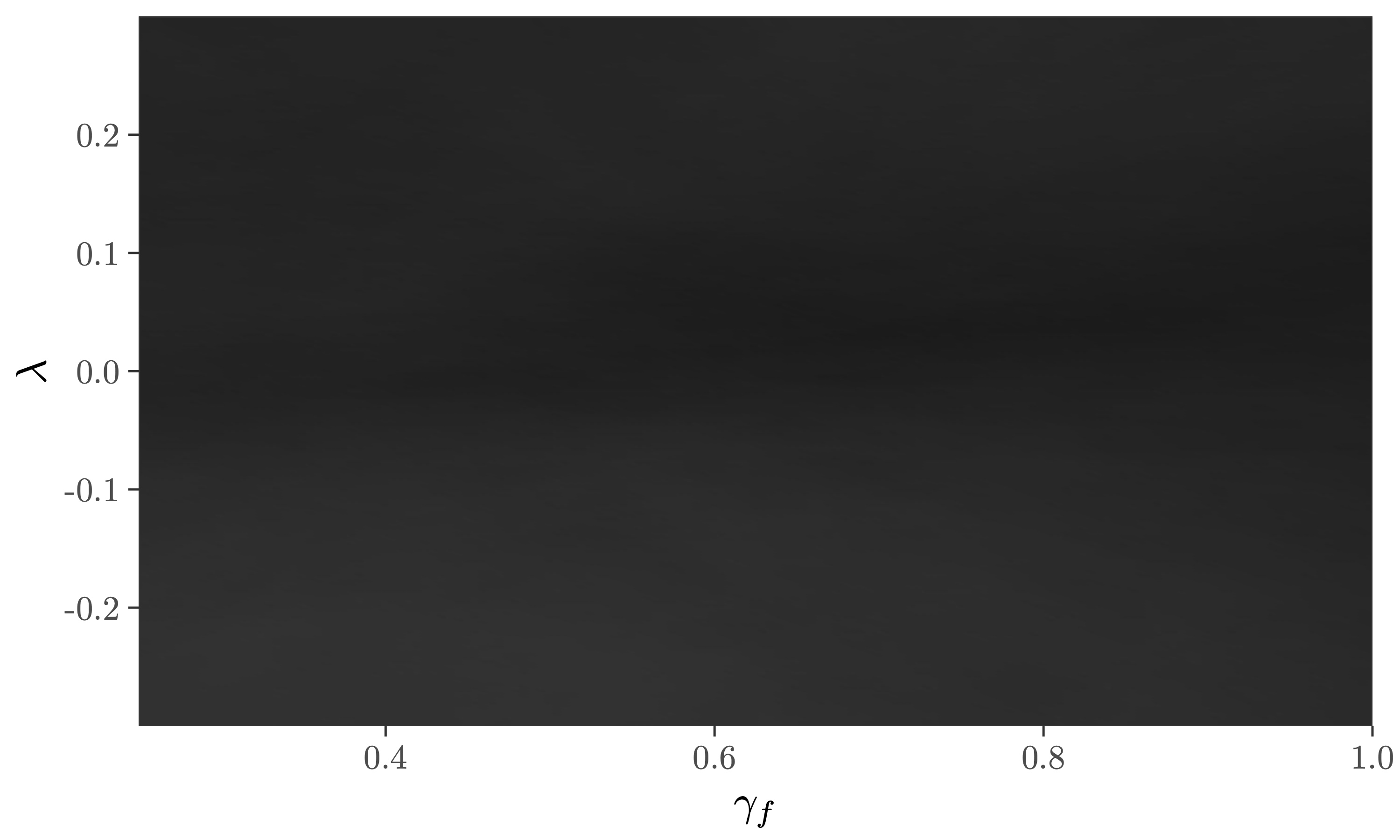}
        \caption{Random}
    \end{subfigure}
    \par\bigskip
    \par\bigskip
  \begin{subfigure}[b]{0.5\textwidth}
    	\centering
        \includegraphics[width=\textwidth]{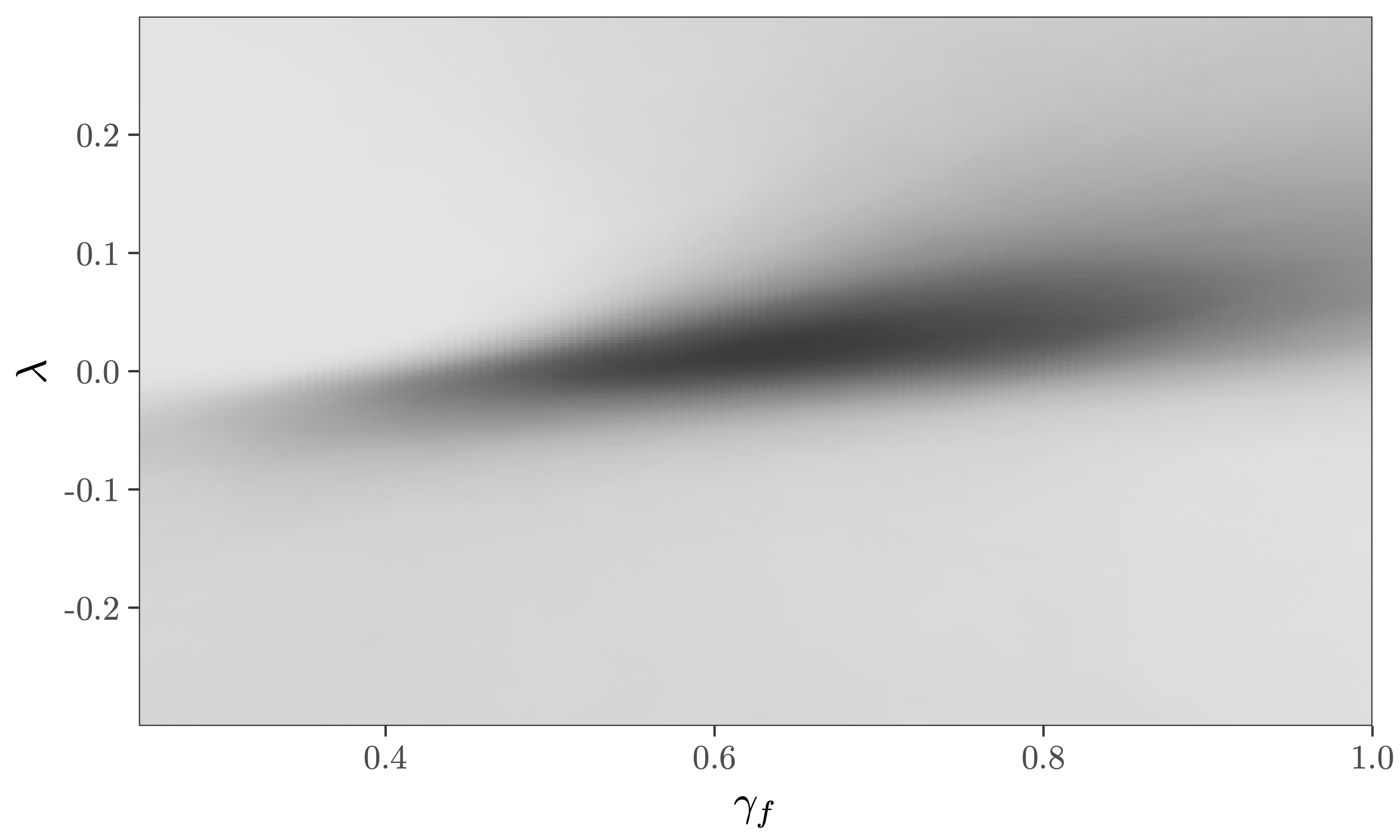}
        \caption{Crude Thresholding}
    \end{subfigure}%
    \begin{subfigure}[b]{0.5\textwidth}
    	\centering
        \includegraphics[width=\textwidth]{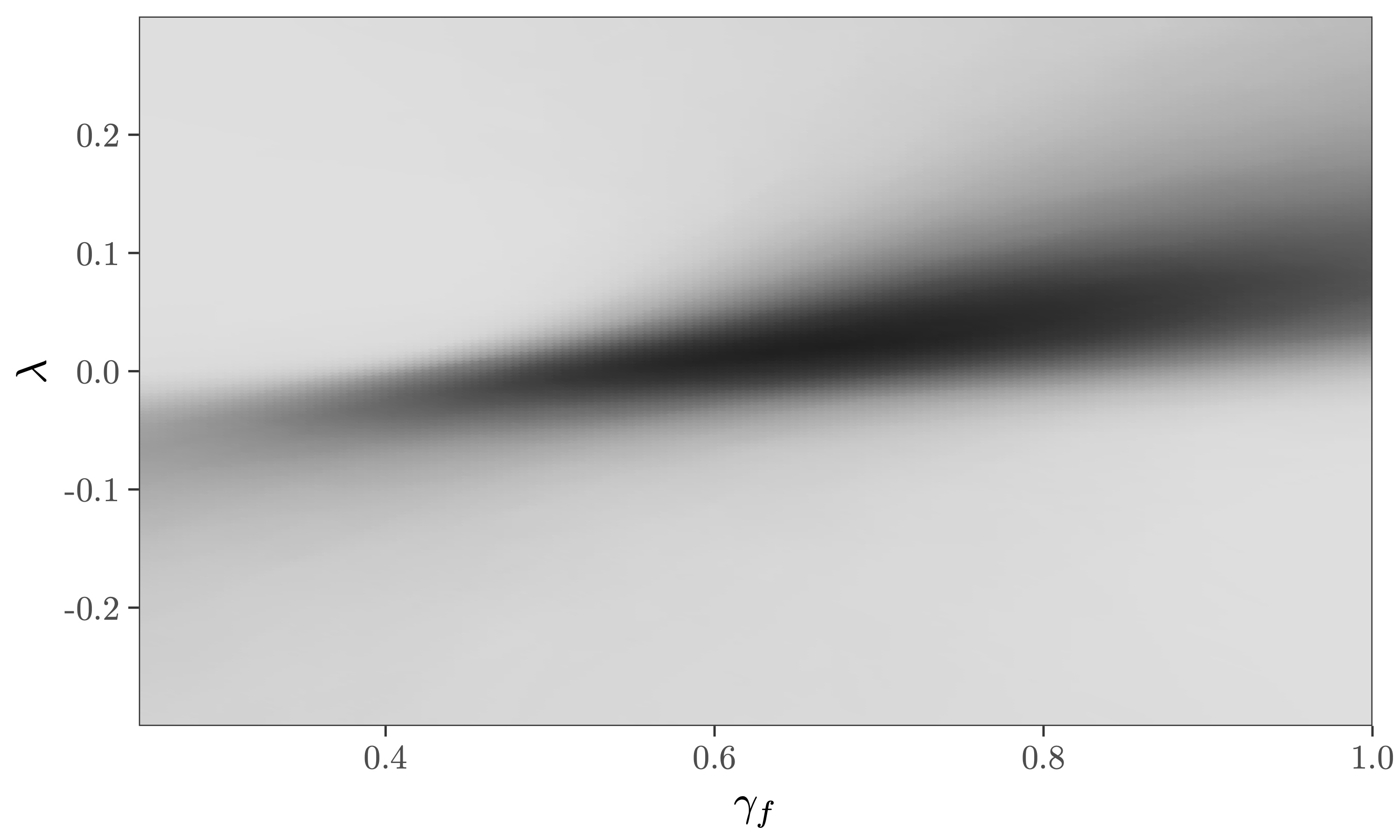}
        \caption{LASSO}
    \end{subfigure}
    \par\bigskip
    \par\bigskip
    \begin{subfigure}[b]{0.5\textwidth}
    	\centering
        \includegraphics[width=\textwidth]{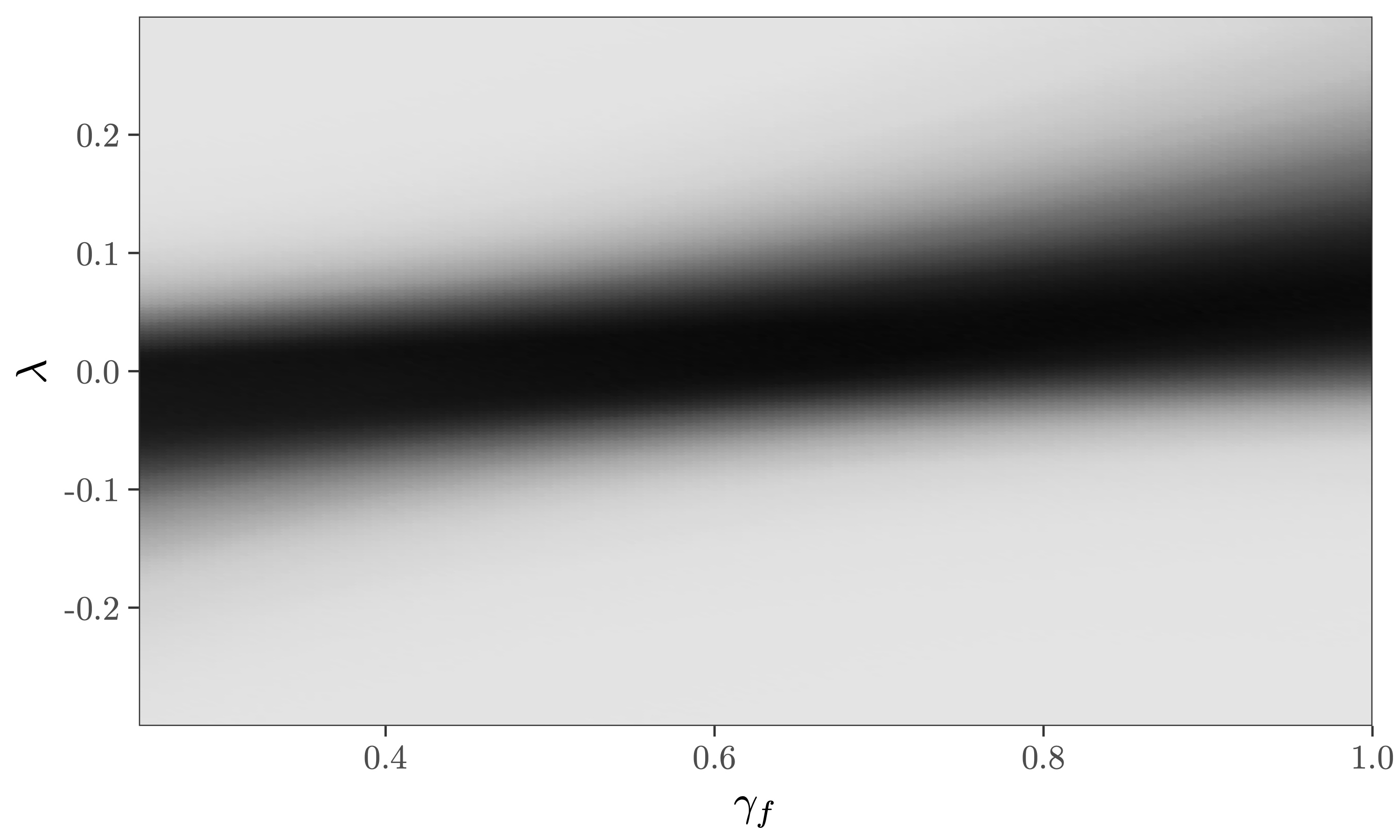}
        \caption{Sup Score}
    \end{subfigure}
	\caption{Simulation results: power. $a_{21} = a_{22} = a_{23} = 0.45$. Nominal test size: 10\%. 1,000 Monte Carlo replications.}\label{Figure-Sim-Results-Power}
\end{figure}

\pagebreak

\section{Empirical Application \label{Section-Empirics}}

Data for the empirical part of this paper is taken from FRED. I use the non-farm labour share as transformed in \citet{Gali:1999tx} as the forcing variable. I use the inflation rate implied by the GDP deflator. I consider the period 1974Q2-2018Q4, and include 90 variables aimed to reflect different parts of the US economy based on the list in \citet{McCracken:2015ct} with four lags each, transforming them as recommended therein.\footnote{I do not include all the variables listed in \citet{McCracken:2015ct} since they are not all available over a sufficiently long period of time.} This yields 179 observations with 359 IVs. Appendix \ref{Appendix-Data} contains a detailed description of the data. 

A natural question to ask is whether considering these 359 IVs is enough to dispel concerns about potential endogeneity biases. Though being more than any number of IVs previously considered in the literature, there are certainly more valid IVs (i.e., additional predetermined variables). However, a substantial endogeneity bias caused by the selection of these 359 IVs would emerge only if the variables were included in the list of \citet{McCracken:2015ct} based on their correlation with the endogenous variables of this application. This seems very unlikely.

For all ad-hoc selection procedures, I limit the number of IVs selected to four, to ensure that overfitting is not a concern, and set the lag-length for the \citet{Newey:1987ua} HAC variance estimator to 4. The results do not change appreciably when other values are chosen.  The confidence sets yielded by the traditional IVs and the ad-hoc selection procedures are shown in Figure \ref{Figure-CS-AD}, and the corresponding IVs are listed in Table \ref{Table-IVs-Selected}. Mirroring the results in Section \ref{Section-Simulation}, the confidence set resulting from random selection is extremely wide, and suggests that the hybrid NKPC is essentially unidentified. The confidence set from applying the LASSO is smaller than the one implied by random selection, but it also does not exclude that the coefficient on expected inflation is in fact equal to zero. 

\begin{table}[H]
\scriptsize
\caption{Identity of the IVs for each of the selection procedures.}\label{Table-IVs-Selected}
\centering
\begin{tabular}{P{0.22\textwidth}P{0.22\textwidth}P{0.22\textwidth}P{0.22\textwidth}}
\toprule
Traditional & Random &  Crude Thresholding & LASSO\\
\midrule

PRS85006173.-1 & WILL5000IND.-3 & PRS85006173.-1 & PRS85006173.-1\\

PRS85006173.-2 & NDMANEMP.-3 & PRS85006173.-2 & PRS85006173.-2\\

GDPDEF.-2 	   & EXUSUK.-3 & DSERRG3M086SBEA.-1&  DSERRG3M086SBEA.-1\\

GDPDEF.-3	   & PERMITMW.-4	& CES3000000008.-1 & SRVPRD.-3\\

\bottomrule
\multicolumn{4}{l}{\tiny\emph{Notes}:}\\
\multicolumn{4}{p{0.88\textwidth}}{\tiny PRS85006173  refers to  Nonfarm Business Sector: Labor Share, GDPDEF refers to  Gross Domestic Product: Implicit Price Deflator, WILL5000IND.-3 refers to  Wilshire 5000 Total Market Index, NDMANEMP.-3 refers to  All Employees, Nondurable Goods, EXUSUK.-3 refers to  U.S. / U.K. Foreign Exchange Rate, PERMITMW.-4 refers to New Private Housing Units Authorized by Building Permits in the Midwest Census Region, DSERRG3M086SBEA refers to  Personal consumption expenditures: Services (chain-type price index), CES3000000008 refers to Average Hourly Earnings of Production and Nonsupervisory Employees, Manufacturing, SRVPRD refers to  All Employees, Service-Providing.}\\
\end{tabular}
\end{table}

\begin{figure}[H]
    \centering
       \begin{subfigure}[b]{0.42\textwidth}
    	\centering
        \includegraphics[width=\textwidth]{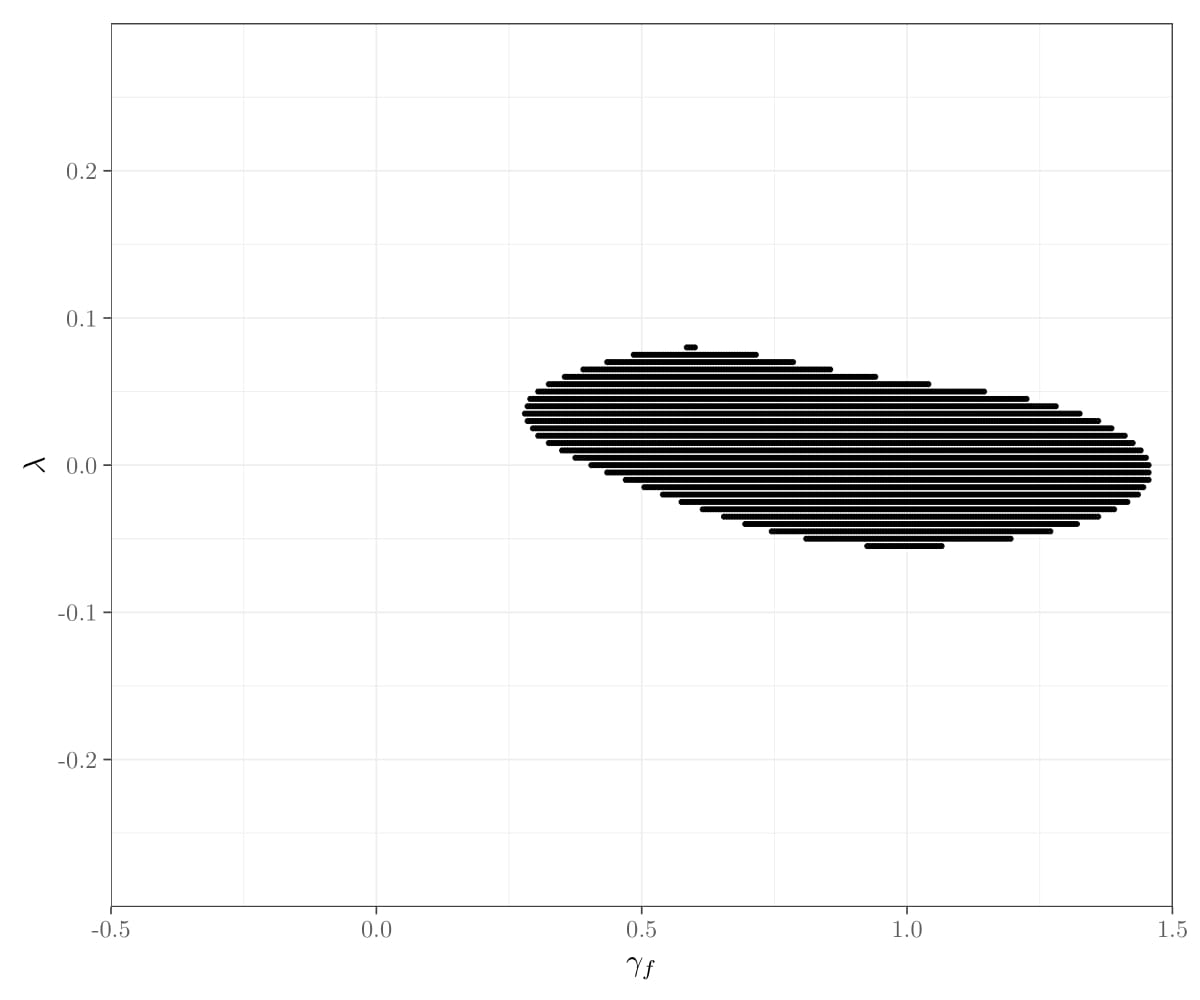}
        \caption{Traditional}
        \label{Figure-CS-Trad}
    \end{subfigure}%
    \begin{subfigure}[b]{0.42\textwidth}
    	\centering
        \includegraphics[width=\textwidth]{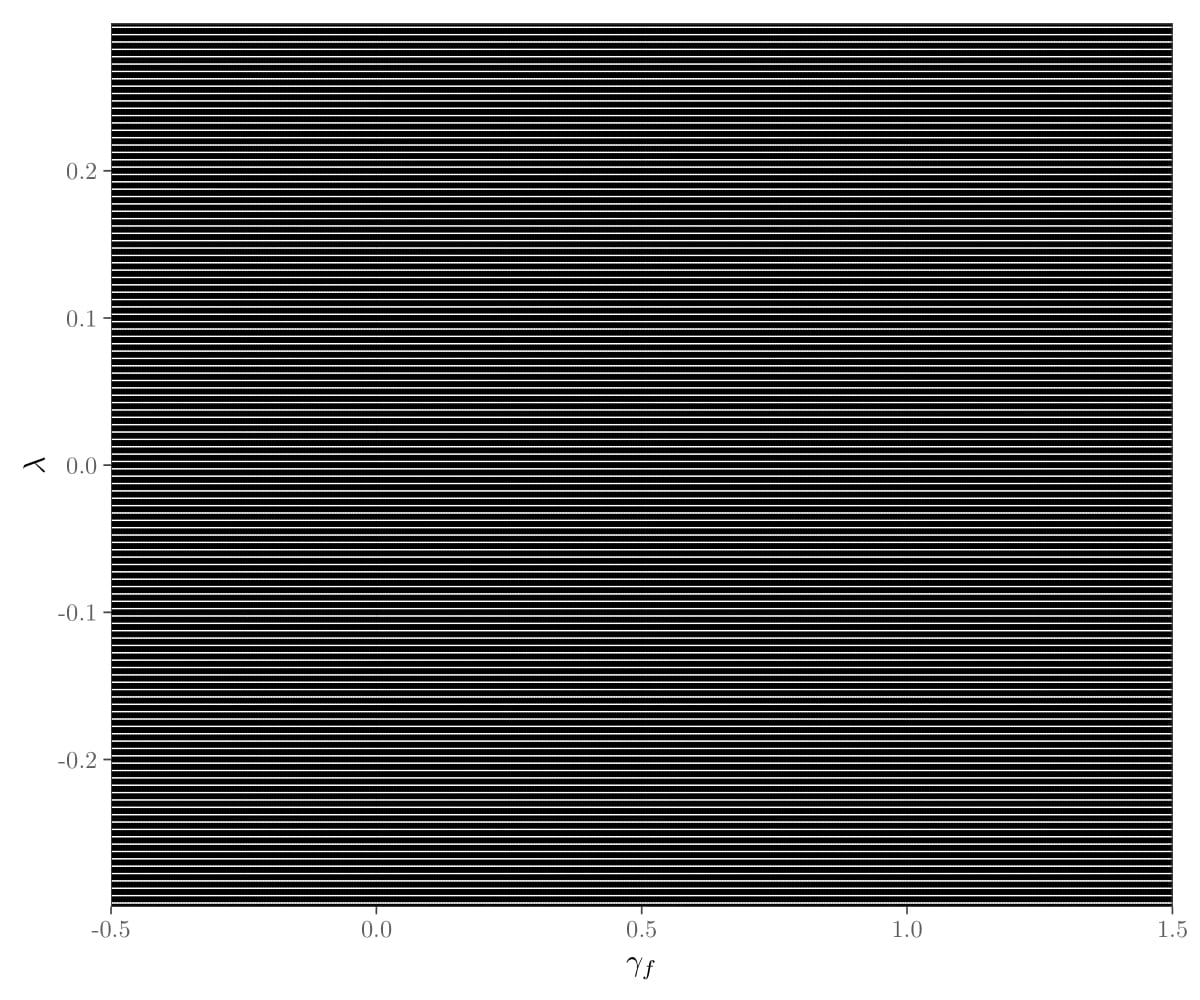}
        \caption{Random}
        \label{Figure-CS-Random}
    \end{subfigure}
  \begin{subfigure}[b]{0.42\textwidth}
    	\centering
        \includegraphics[width=\textwidth]{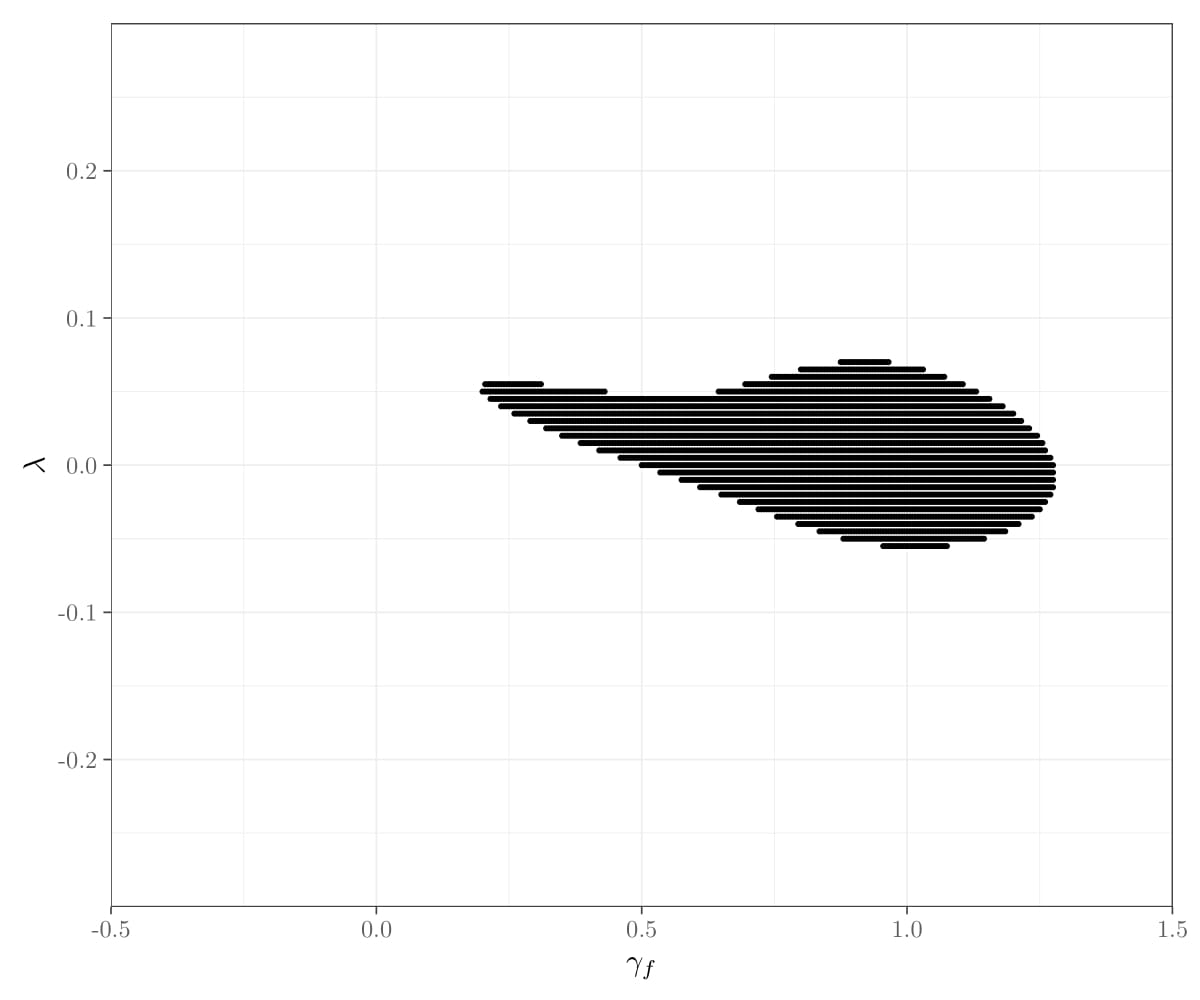}
        \caption{Crude Thresholding}
        \label{Figure-CS-Selected}
    \end{subfigure}%
\begin{subfigure}[b]{0.42\textwidth}
    	\centering
        \includegraphics[width=\textwidth]{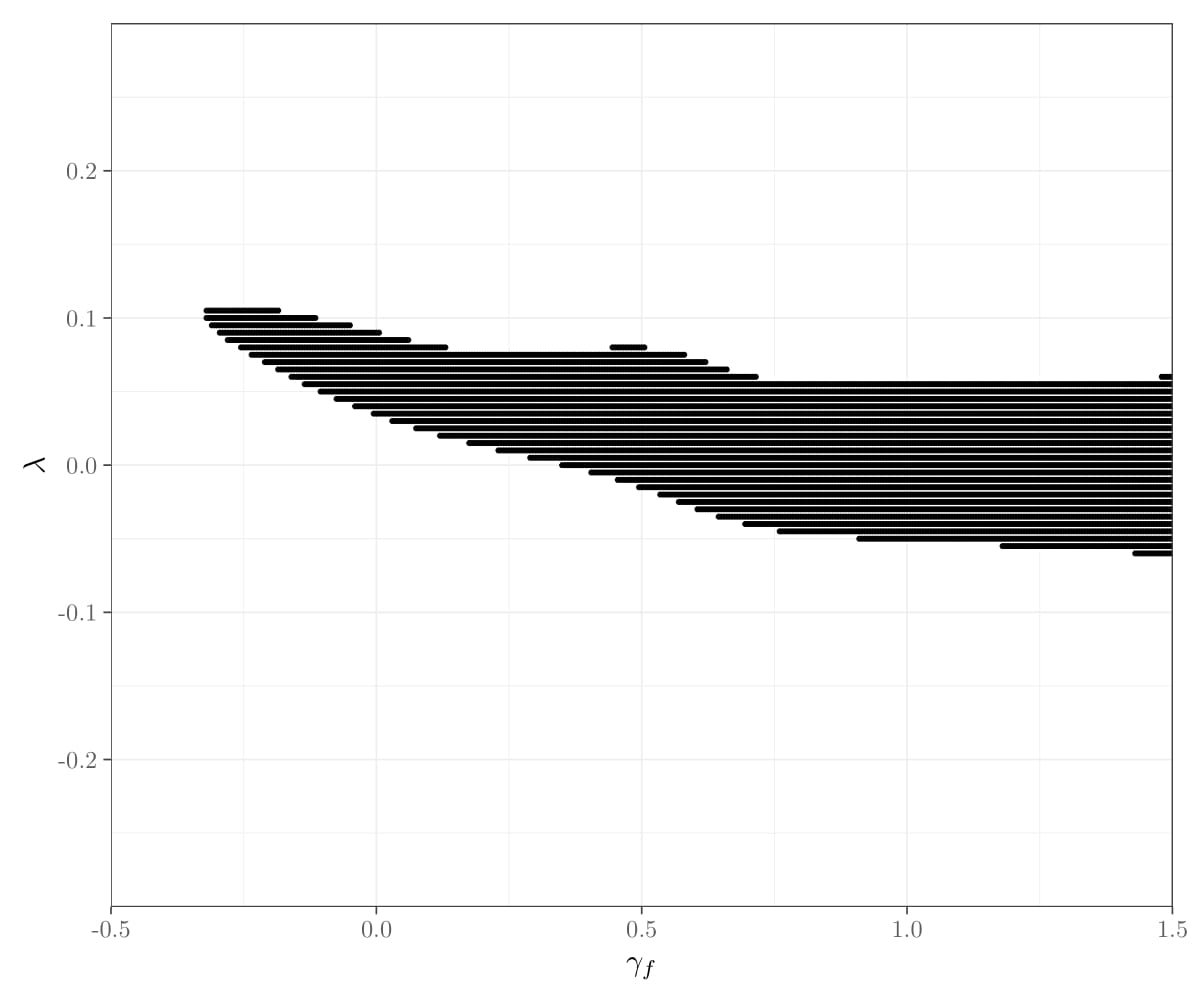}
        \caption{LASSO}
        \label{Figure-CS-LASSO}
    \end{subfigure} 
    \caption{90\% confidence sets for the NKPC in Equation \eqref{Equation-PC-Hybrid} using the $S$ statistics with selected IVs by different ad-hoc procedures. The identity of the IVs selected by each of the procedures is shown in Table \ref{Table-IVs-Selected}.}
    \label{Figure-CS-AD}
\end{figure}

The confidence sets implied by traditional IVs and crude thresholding are qualitatively very similar. This similarity is explained by the fact that the IVs chosen by crude thresholding are very similar to the traditional IVs, as shown in Table \ref{Table-IVs-Selected}. This suggests two things. First, it suggests that the ad-hoc selection procedures used in this paper may in fact provide a reasonable approximation to the approach taken for selecting IVs in the past literature. Second, given that in the simulation exercise in Section \ref{Section-Simulation} crude thresholding yields the worst size distortions of the selection procedures considered, this result suggests that the confidence sets reported in the previous literature are likely to suffer from at least some distortion due to endogeneity bias. In particular, it suggests that the process of trying to find `strong' IVs may have led to an undercovering of the true parameter values. 

The confidence sets of the Sup Score test proposed in this paper for different block lengths (4, 6, 8, and 10) are shown in Figure \ref{Figure-CS}. For all block lengths considered (the results do not appear to be sensitive to the choice of block length), the confidence sets are smaller than the ones yielded by the random selection approach, but wider than for the traditional, crude thresholding, and LASSO approach. This is likely due to a combination of the incorrect size of the latter approaches, and the low power of the Sup Score test documented in Section \ref{Section-Simulation}. The results suggest that while certain parts of the parameter space considered can be rejected at the 10\% level of significance, neither $\lambda$ nor $\gamma_f$ are found to be different from zero for all values of the parameter space considered. 

\begin{figure}[H]
    \centering
       \begin{subfigure}[b]{0.38\textwidth}
    	\centering
        \includegraphics[width=\textwidth]{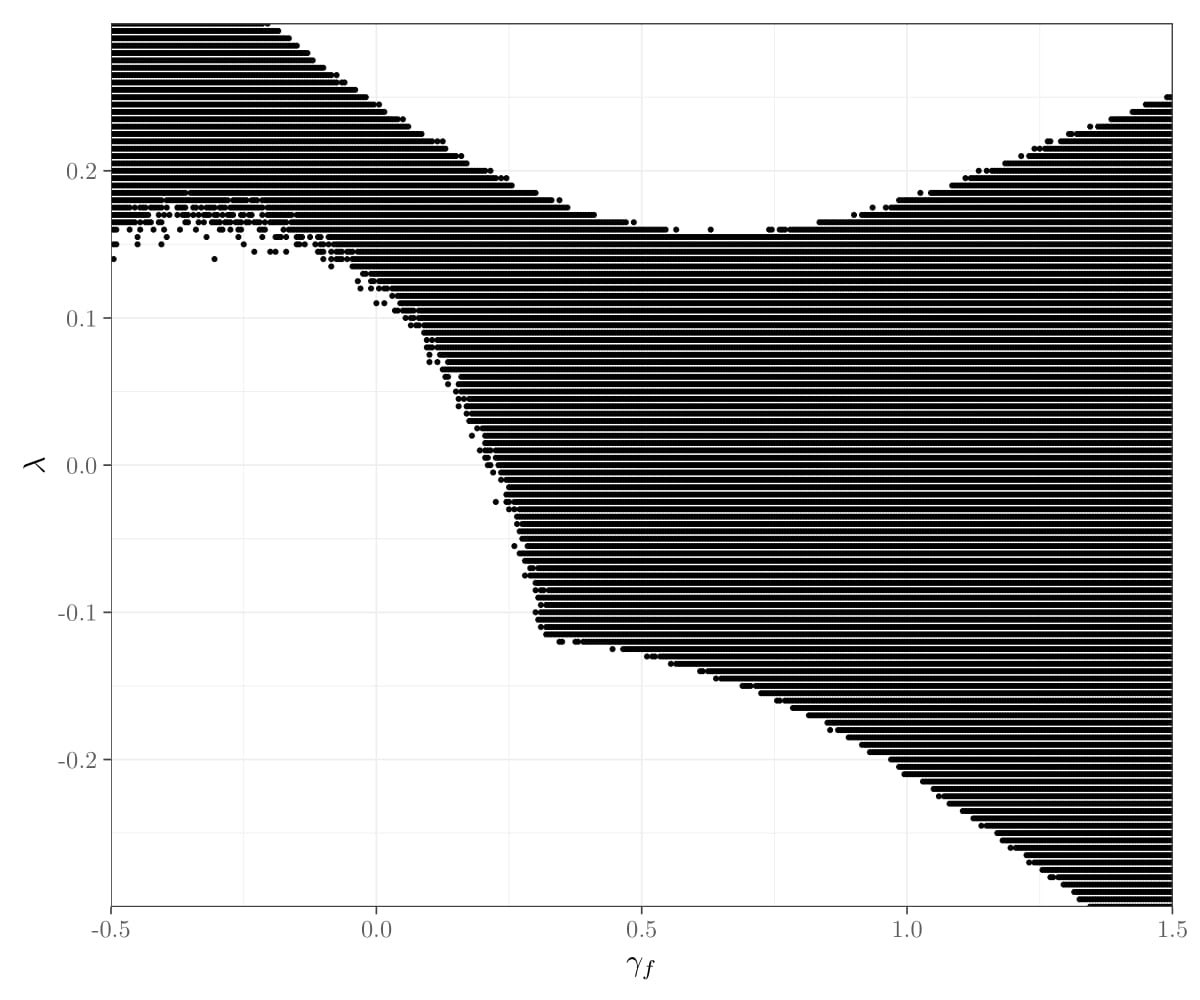}
        \caption{$b_T = 4$}
        \label{Figure-CS-Sup-4}
    \end{subfigure}%
    \begin{subfigure}[b]{0.38\textwidth}
    	\centering
        \includegraphics[width=\textwidth]{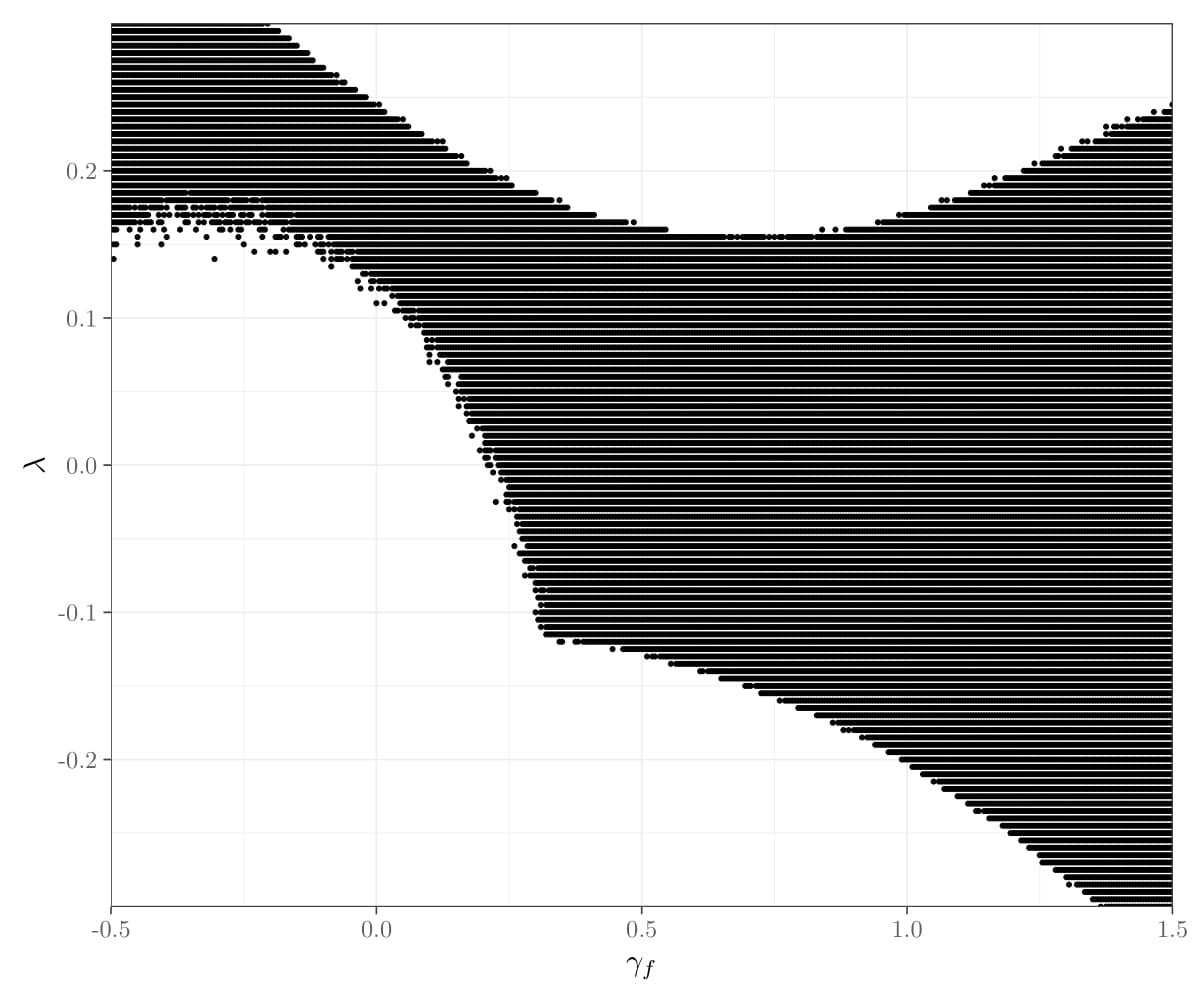}
        \caption{$b_T = 6$}
        \label{Figure-CS-Sup-8}
    \end{subfigure}
  \begin{subfigure}[b]{0.38\textwidth}
    	\centering
        \includegraphics[width=\textwidth]{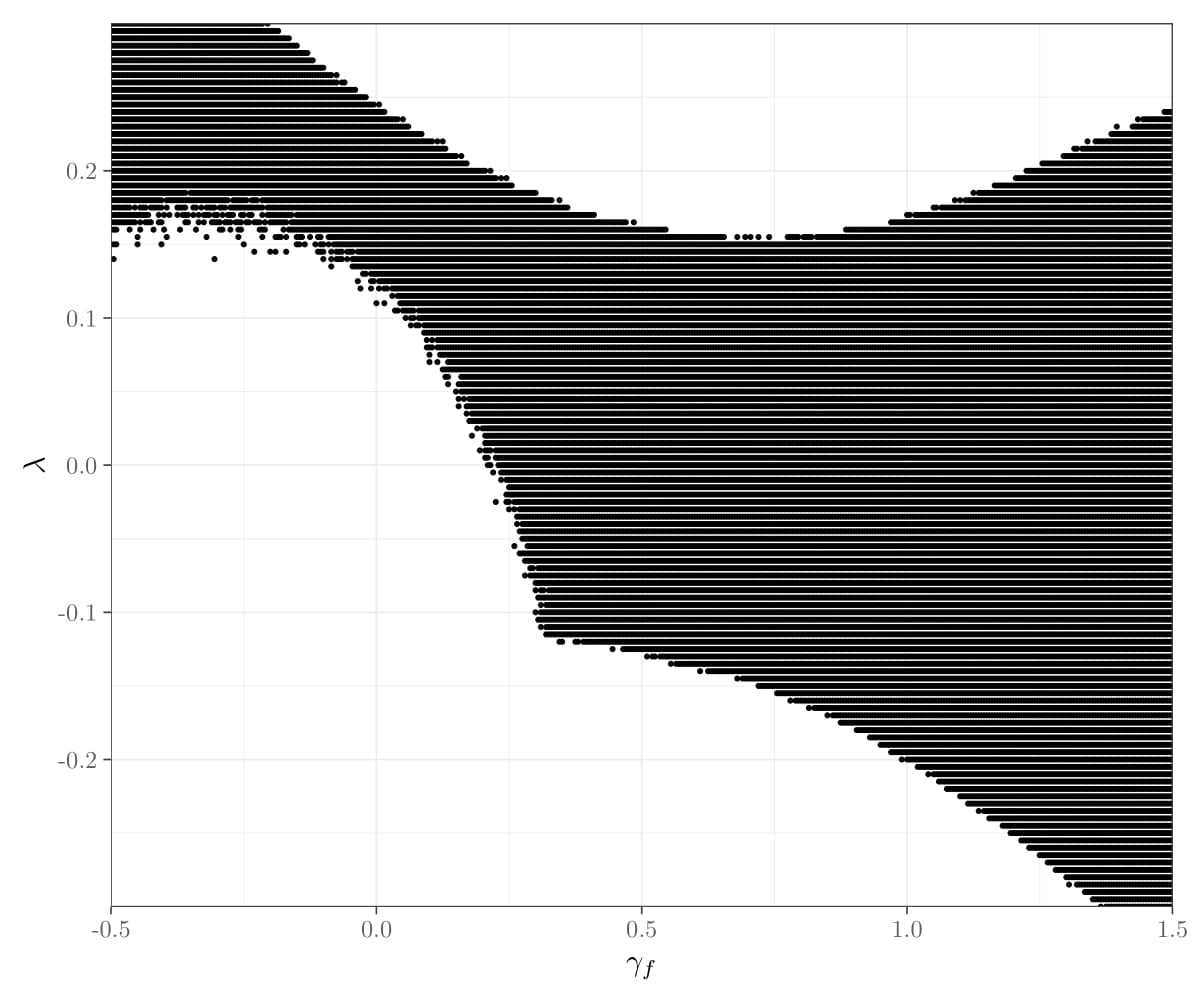}
        \caption{$b_T = 8$}
        \label{Figure-CS-Sup-12}
    \end{subfigure}%
    \begin{subfigure}[b]{0.38\textwidth}
    	\centering
        \includegraphics[width=\textwidth]{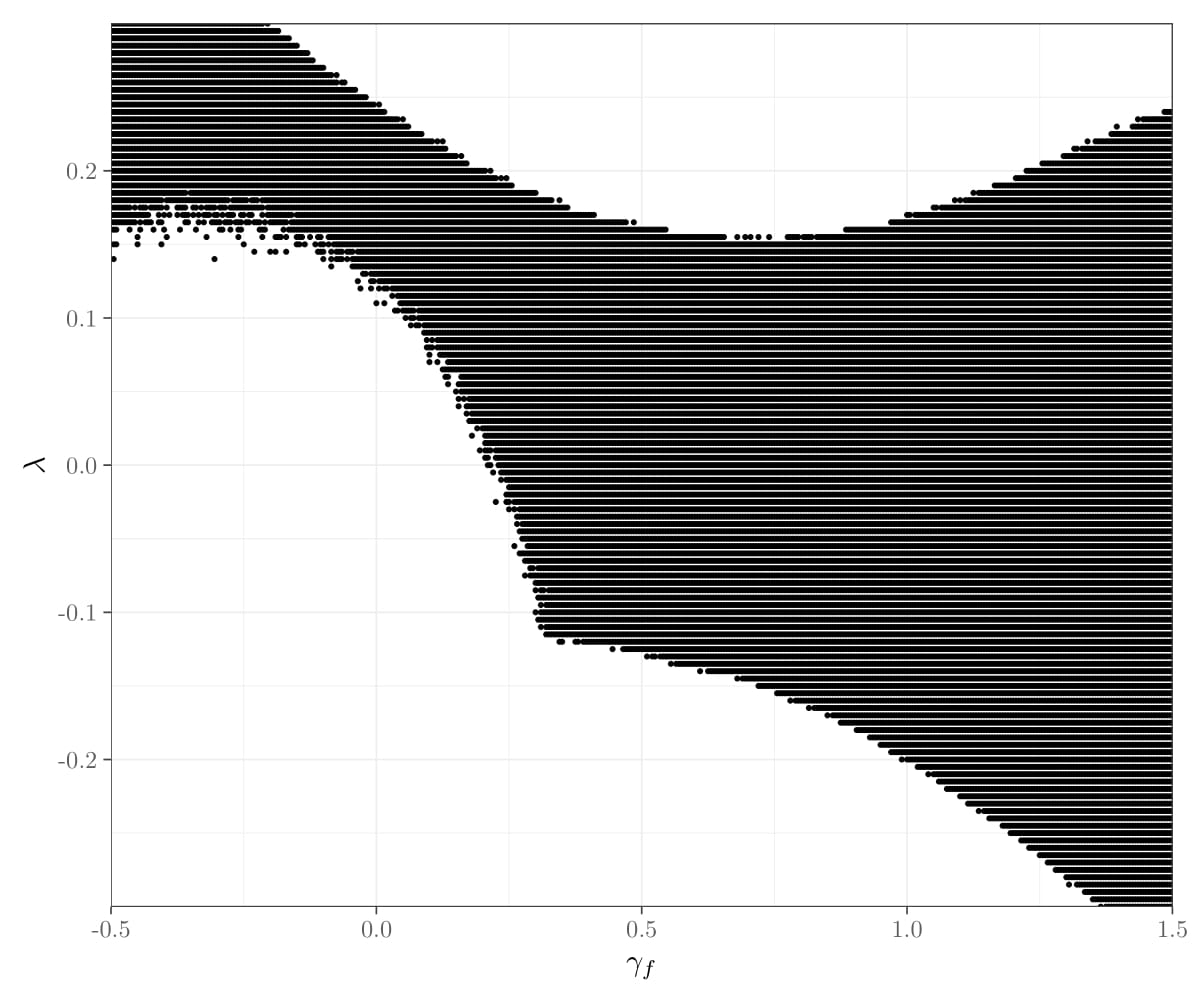}
        \caption{$b_T = 10$}
        \label{Figure-CS-Sup-16}
    \end{subfigure}
    \caption{90\% confidence sets for the NKPC in Equation \eqref{Equation-PC-Hybrid} using the Sup Score test.}
    \label{Figure-CS}
\end{figure}

The Sup Score test can also help shed some light on what the most relevant IVs are, since there is a (likely) unique IV that maximises the Sup Score statistic in Equation \eqref{Equation-TSS} for every null hypothesis being tested. I record the identity of the IV maximising the Sup Score statistic for each null hypothesis tested, and report the results in Table \ref{Table-Sup-Which} and Figure \ref{Figure-Sup-Where}. Table \ref{Table-Sup-Which} shows the identity of the IVs maximising the Sup Score statistic. Figure \ref{Figure-Sup-Where} shows in which part of the parameter space the different IVs maximise the Sup Score statistic. Two things stand out. First, the IVs that feature prominently in Table \ref{Table-IVs-Selected} (i.e., IVs selected by the ad-hoc procedures) also tend to feature in the set of IVs maximising the Sup Score statistic (i.e., lags of PRS85006173 and CES3000000008). The reason why the confidence sets are larger for the Sup Score test is in part due to the Sup Score test being able to account for the very many other valid IVs that these variables were chosen from. Second, there are some IVs that maximise the Sup Score statistic that do not feature in Table \ref{Table-IVs-Selected}, such as the three-period lagged Housing Starts in Northeast Census Region (e.g., HOUSTNE.-3). This relates to the predominant motivation for wanting to consider very many IVs: the truly relevant IVs can often be `exotic', in the sense that intuition alone would not point to their relevance.

\begin{table}[H]
\scriptsize
\caption{Identity of the IVs maximising the Sup Score statistic.}\label{Table-Sup-Which}
\centering
\begin{tabular}{p{0.25\textwidth}p{0.1\textwidth}p{0.55\textwidth}}
\toprule
IV & \# $H_0$ &  Description\\
\midrule

  {CES3000000008.-1} & 13,879 & One-period lag of Average Hourly Earnings of Production and Nonsupervisory Employees, Manufacturing \\

  {HOUSTNE.-3} & 11,021 & Housing Starts in Northeast Census Region \\

  PRS85006173.-3 & 8,404 & Three-period lag of Nonfarm Business Sector: Labor Share \\

  PRS85006173.-1 & 7,263 & One-period lag of Nonfarm Business Sector: Labor Share \\

  PRS85006173.-2 & 4,339 & Two-period lag of Nonfarm Business Sector: Labor Share \\

  CUMFNS.-3 & 1,802 & Three-period lag of Capacity Utilization: Manufacturing\\

  CUSR0000SAS.-2 & 1,742 & Two-period lag of Consumer Price Index for All Urban Consumers: Services in U.S. City Average\\

  IPCONGD.-3 & 68 & Three-period lag of Industrial Production: Consumer Goods\\

  {DDURRG3M086SBEA.-1} & 2 & One-period lagged personal consumption expenditures: durable goods \\

\bottomrule
\end{tabular}
\end{table}

\begin{figure}[H]
	\begin{subfigure}[b]{0.6\textwidth}
    	\centering
        \includegraphics[width=\textwidth]{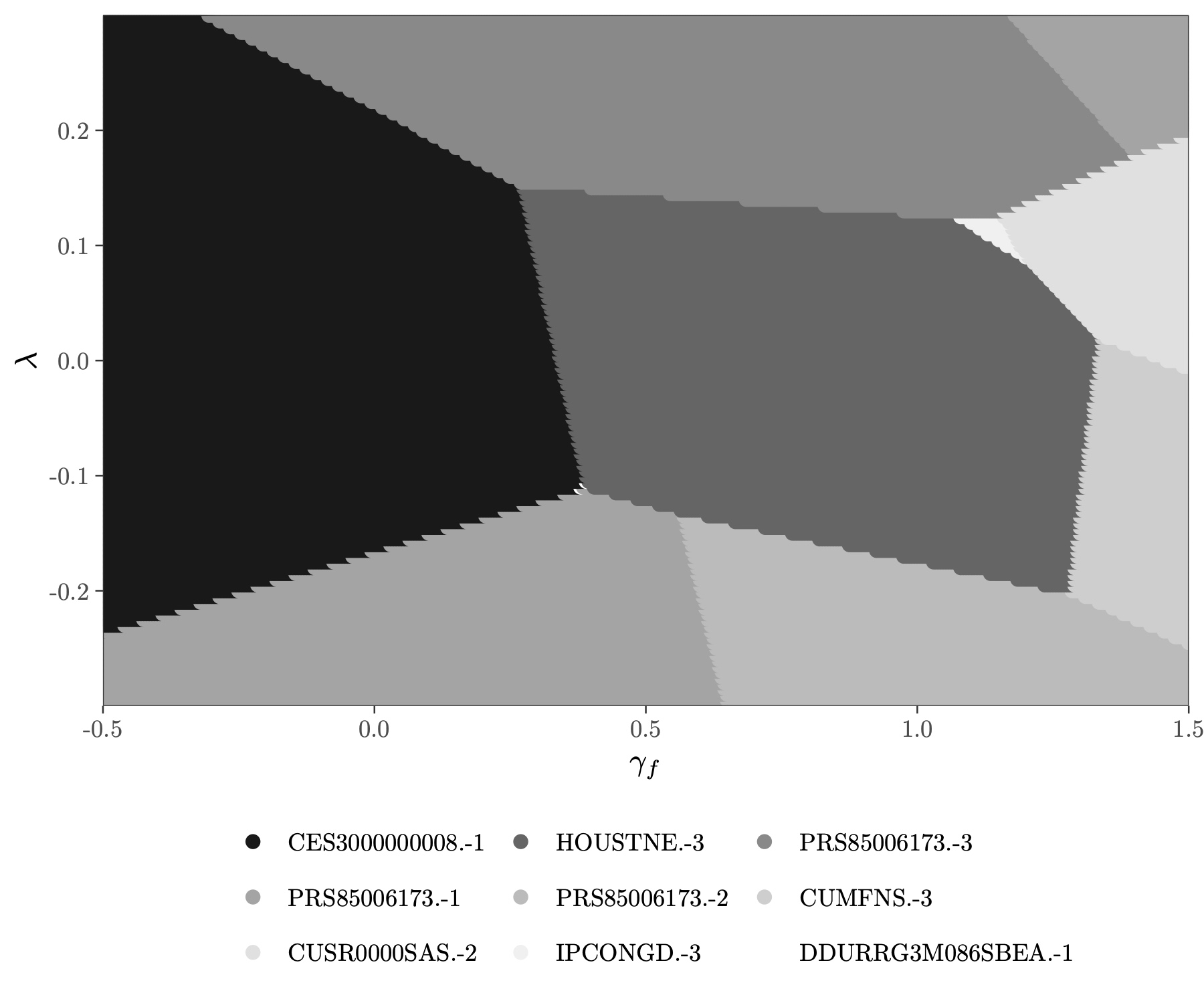}
    \end{subfigure}
    \centering
	\caption{Location in the parameter space $\gamma_f \times \lambda$ where the different IVs in Table \ref{Table-Sup-Which} maximise the Sup Score statistic.}\label{Figure-Sup-Where}

\end{figure}

\section{Conclusion \label{Section-Conclusion}}

IV-based limited-information estimation of single equations has become increasingly popular in Macroeconomics over the last 20 years. Using a simulation exercise based on NKPCs, I showed that selecting IVs in ad-hoc ways (random selection, crude thresholding, and LASSO) can invalidate them, thus yielding invalid inference even if tests with desirable properties (such as robustness to weak identification) are used post-selection. To address this issue, I propose a Sup Score test that remains valid for high-dimensional IVs and for time series data. In the same simulation exercise that showed that ad-hoc selection procedures can lead to invalid inference, this statistic yielded correct size and reasonable power. Finally, I applied the Sup Score test to conduct inference on the US NKPC with 359 IVs on a sample size of 179 observations. The results showed that the confidence sets implied by the Sup Score test are substantially wider than the ones of all other approaches. The simulation results and the empirical application point to the importance of developing further high-dimensional IV methods with good power properties that remain valid under dependence and arbitrarily weak identification.

\pagebreak

\printbibliography

@article{Anderson:1985wr,
author = {Anderson, Gary and Moore, George},
title = {{A Linear Algebraic Procedure for Solving Linear Perfect Foresight Models}},
journal = {Economics Letters},
year = {1985},
volume = {17},
number = {3},
pages = {247--252},
month = nov
}

@article{Newey:1987ua,
author = {Newey, Whitney and West, Kenneth},
title = {{A Simple, Positive Semi-Definite, Heteroskedasticity and Autocorrelation Consistent Covariance Matrix}},
journal = {Econometrica},
year = {1987},
volume = {55},
number = {3},
pages = {703--708},
month = aug
}

@article{Hsing:2004gma,
author = {Hsing, Tailen and Wu, Wei Biao},
title = {{On weighted U-statistics for stationary processes}},
journal = {The Annals of Probability},
year = {2004},
volume = {32},
number = {2},
pages = {1600--1631},
month = apr
}

@article{Crudu:2020bk,
author = {Crudu, Federico and Mellace, Giovanni and S{\'a}ndor, Zsolt},
title = {{Inference in Instrumental Variable Models with Heteroscedasticity and Many Instruments}},
journal = {Econometric Theory},
year = {2020},
volume = {77},
pages = {1--30},
month = mar
}

@article{Zhang:2014vm,
author = {Zhang, Xianyang and Cheng, Guang},
title = {{Bootstrapping High Dimensional Time Series}},
journal = {arXiv},
volume = {1406},
number = {1037},
year = {2014},
pages = {1--53},
month = aug
}

@article{Zhang:2018hy,
author = {Zhang, Xianyang and Cheng, Guang},
title = {{Gaussian approximation for high dimensional vector under physical dependence}},
journal = {Bernoulli},
year = {2018},
volume = {24},
number = {4A},
pages = {2640--2675},
month = nov
}

@article{Bayar:2018er,
author = {Bayar, Omer},
title = {{Weak instruments and estimated monetary policy rules}},
journal = {Journal of Macroeconomics},
year = {2018},
volume = {58},
pages = {308--317},
month = dec
}

@article{Mavroeidis:2010bv,
author = {Mavroeidis, Sophocles},
title = {{Monetary Policy Rules and Macroeconomic Stability: Some New Evidence}},
journal = {American Economic Review},
year = {2010},
volume = {100},
number = {1},
pages = {491--503},
month = mar
}

@article{Yogo:2004vm,
author = {Yogo, Motohiro},
title = {{Estimating the Elasticity of Intertemporal Substitution When Instruments Are Weak}},
journal = {The Review of Economics and Statistics},
year = {2004},
volume = {86},
number = {3},
pages = {797--810},
month = aug
}

@article{Ascari:2019ch,
author = {Ascari, Guido and Magnusson, Leandro M and Mavroeidis, Sophocles},
title = {{Empirical evidence on the Euler equation for consumption in the US}},
journal = {Journal of Monetary Economics},
year = {2019},
pages = {1--24},
month = dec
}

@article{Wu:2005uh,
author = {Wu, Wei Biao},
title = {{Nonlinear System Theory: Another Look at Dependence}},
journal = {Proceedings of the National Academy of Sciences},
year = {2005},
volume = {102},
number = {40},
pages = {14150--14154},
month = sep
}

@article{Wang:2019ta,
author = {Wang, Runmin and Shao, Xiaofeng},
title = {{Hypothesis Testing for High-Dimensional Time Series Via Self-Normalisation}},
journal = {Mimeo},
year = {2019},
pages = {1--30},
month = sep
}

@article{Chen:2016fh,
author = {Chen, Xiaohong and Shao, Qi-Man and Wu, Wei Biao and Xu, Lihu},
title = {{Self-normalized Cramér-type moderate deviations under dependence}},
journal = {The Annals of Statistics },
year = {2016},
volume = {44},
number = {4},
pages = {1593--1617},
month = aug
}

@article{Chernozhukov:2018ema,
author = {Chernozhukov, Victor and Chetverikov, Denis and Kato, Kengo},
title = {{Inference on Causal and Structural Parameters using Many Moment Inequalities}},
journal = {The Review of Economic Studies},
year = {2018},
volume = {86},
number = {5},
pages = {1867--1900},
month = nov
}

@article{Mikusheva:2020uo,
author = {Mikusheva, Anna and Sun, Liyang},
title = {{Inference with Many Weak Instruments}},
journal = {arXiv},
year = {2020},
volume = {2004},
number = {12445},
pages = {1--30},
month = apr
}

@article{Horowitz:2018to,
author = {Horowitz, Joel L},
title = {{Non-Asymptotic Inference in Instrumental Variables Estimation}},
journal = {arXiv},
volume = {1809},
number = {03600},
year = {2018},
pages = {1--33},
month = sep
}

@article{Kapetanios:2015fp,
author = {Kapetanios, George and Khalaf, Lynda and Marcellino, Massimiliano},
title = {{Factor-Based Identification-Robust Interference in IV Regressions}},
journal = {Journal of Applied Econometrics},
year = {2015},
volume = {31},
number = {5},
pages = {821--842},
month = apr
}

@article{Deng:2020wk,
author = {Deng, Han and Zhang, Cunhui},
title = {{Beyond Gaussian Approximation: Bootstrap for Maxima of Sums of Independent Random Vectors}},
journal = {arXiv},
year = {2020},
volume = {1705},
number = {09528},
pages = {1--58},
month = jan
}

@article{Anatolyev:2010kk,
author = {Anatolyev, Stanislav and Gospodinov, Nikolay},
title = {{Specification Testing in Models with Many Instruments}},
journal = {Econometric Theory},
year = {2010},
volume = {27},
number = {2},
pages = {427--441},
month = sep
}

@article{Dufour:2009bu,
author = {Dufour, Jean-Marie},
title = {{Comment}},
journal = {Journal of Business {\&} Economic Statistics},
year = {2009},
volume = {27},
number = {3},
pages = {318--321},
month = jul
}

@article{Stock:2000vt,
author = {Stock, James and Wright, Jonathan},
title = {{GMM with Weak Identification}},
journal = {Econometrica},
year = {2000},
volume = {68},
number = {5},
pages = {1055--1096},
month = dec
}

@article{McCracken:2015ct,
author = {McCracken, Michael and Ng, Serena},
title = {{FRED-MD: A Monthly Database for Macroeconomic Research}},
journal = {Journal of Business {\&} Economic Statistics},
year = {2016},
volume = {34},
number = {4},
pages = {574--589},
month = oct
}

@article{Ng:2011eq,
author = {Ng, Serena and Bai, Jushan},
title = {{Selecting Instrumental Variables in a Data Rich Environment}},
journal = {Journal of Time Series Econometrics},
year = {2009},
volume = {1},
number = {1},
pages = {1--34},
month = mar
}

@article{Mirza:2014wd,
author = {Mirza, Harun and Storjohann, Lidia},
title = {{Making Weak Instrument Sets Stronger: Factor-Based Estimation of Inflation Dynamics and a Monetary Policy Rule}},
journal = {Journal of Money, Credit and Banking},
year = {2014},
volume = {46},
number = {4},
pages = {643--664},
month = apr
}

@article{Berriel:2016hj,
author = {Berriel, Tiago and Medeiros, Marcelo and Sena, Marcelo},
title = {{Instrument selection for estimation of a forward-looking Phillips Curve}},
journal = {Economics Letters},
year = {2016},
volume = {145},
pages = {123--125},
month = aug
}

@article{Berriel:2019vp,
author = {Berriel, Tiago and Medeiros, Marcelo and Sena, Marcelo},
title = {{Regularization and Identification of the New Keynesian Phillips Curve}},
journal = {European Conferences of the Econometrics Community Conference Paper},
year = {2019},
pages = {1--27},
month = sep
}

@article{Newey:2009fs,
author = {Newey, Whitney and Windmeijer, Frank},
title = {{Generalized Method of Moments With Many Weak Moment Conditions}},
journal = {Econometrica},
year = {2009},
volume = {77},
number = {3},
pages = {687--719}
}

@article{Ma:2002wl,
author = {Ma, Adrian},
title = {{GMM estimation of the new Phillips curve}},
journal = {Economics Letters},
year = {2002},
volume = {76},
pages = {411--417},
month = aug
}

@article{Dufour:2006bh,
author = {Dufour, Jean-Marie and Khalaf, Lynda and Kichian, Maral},
title = {{Inflation dynamics and the New Keynesian Phillips Curve: An identification robust econometric analysis}},
journal = {Journal of Economic Dynamics and Control},
year = {2006},
volume = {30},
number = {9},
pages = {1707--1727},
month = sep
}

@article{Gali:1999tx,
author = {Galì, Jordi and Gertler, Mark},
title = {{Inflation dynamics: A structural econometric analysis}},
journal = {Journal of Monetary Economics},
year = {1999},
volume = {44},
number = {2},
pages = {195--222},
month = nov
}

@article{Mavroeidis:2014ge,
author = {Mavroeidis, Sophocles and Plagborg-M{\o}ller, Mikkel and Stock, James},
title = {{Empirical Evidence on Inflation Expectations in the New Keynesian Phillips Curve'}},
journal = {Journal of Economic Literature},
year = {2014},
volume = {52},
number = {1},
pages = {124--188},
month = jul
}

@article{Kleibergen:2005ug,
author = {Kleibergen, Frank},
title = {{Testing Parameters in GMM Without Assuming that They Are Identified}},
journal = {Econometrica},
year = {2005},
volume = {73},
number = {4},
pages = {1103--1123},
month = may
}

@article{Kleibergen:2009do,
author = {Kleibergen, Frank and Mavroeidis, Sophocles},
title = {{Weak Instrument Robust Tests in GMM and the New Keynesian Phillips Curve}},
journal = {Journal of Business {\&} Economic Statistics},
year = {2009},
volume = {27},
number = {3},
pages = {293--311},
month = jul
}

@article{Chernozhukov:2015iq,
author = {Chernozhukov, Victor and Hansen, Christian and Spindler, Martin},
title = {{Post-Selection and Post-Regularization Inference in Linear Models with Many Controls and Instruments}},
journal = {American Economic Review},
year = {2015},
volume = {105},
number = {5},
pages = {486--490},
month = may
}

@article{Andrews:2007bl,
author = {Andrews, Donald W. K. and Stock, James},
title = {{Testing with many weak instruments}},
journal = {Journal of Econometrics},
year = {2007},
volume = {138},
number = {1},
pages = {24--46},
month = may
}

@article{Anderson:1949tx,
author = {Anderson, Theodore and Rubin, Herman},
title = {{Estimation of the Parameters of a Single Equation in a Complete System of Stochastic Equations}},
journal = {The Annals of Mathematical Statistics},
year = {1949},
volume = {20},
number = {1},
pages = {46--63},
month = jan
}

@article{Belloni:2012kw,
author = {Belloni, Alexandre and Chen, Daniel and Chernozhukov, Victor and Hansen, Christian},
title = {{Sparse Models and Methods for Optimal Instruments With an Application to Eminent Domain}},
journal = {Econometrica},
year = {2012},
volume = {80},
number = {6},
pages = {2369--2429},
month = nov
}

@article{Hansen:2014ie,
author = {Hansen, Christian and Kozbur, Damian},
title = {{Instrumental variables estimation with many weak instruments using regularized JIVE}},
journal = {Journal of Econometrics},
year = {2014},
volume = {182},
number = {2},
pages = {290--308},
month = oct
}

@article{Giannone:2018uv,
author = {Giannone, Domenico and Lenza, Michele and Primiceri, Giorgio},
title = {{Economic predictions with big data: the illusion of sparsity}},
journal = {CEPR Discussion Paper},
year = {2018},
volume = {12256},
pages = {1--27},
month = jun
}

\appendix\newpage\markboth{Appendix}{Appendix}
\renewcommand{\thesection}{\Alph{section}}
\numberwithin{equation}{section}

\section{Proof of Example \ref{Example-Z-UGMC} \label{Appendix-Example}}

\begin{proof}

		Assumption \ref{Assumption-Main}.\ref{Assumption-Z} holds by the assumption of mean-zero independent error terms, the assumption of non-zero variances, and the assumption of finite fourth moments.
		
		Assumption \ref{Assumption-Main}.\ref{Assumption-Weird-MC} is satisfied by the assumption of (mean-zero) independent error terms.
	
		Since $\mathbb{E}[\varepsilon_t^4] < \infty$, I can re-write Equation \eqref{Equation-Z-UGMC} as
		\begin{equation*}
			\mathbb{E}[|Z_{tj} - \mathcal{F}_j(\dots, v^*_{-1}, v_0^*, v_1, \dots, v_t)|^4] \mathbb{E}[|\varepsilon_t|^4] \leq \breve{C} {\tilde{\rho}}^t,
		\end{equation*}
		for some new constant $\breve{C}$. Then it follows that
		\begin{equation*}
			\begin{aligned}
				\breve{C} {\tilde{\rho}}^t \geq  \mathbb{E}\left[\left(\mathcal{F}_j(\dots, v_{t-1}, v_t)\varepsilon_t - \mathcal{F}_j(\dots, v^*_{-1}, v^*_0, v_1, \dots, v_t)\varepsilon_t\right)^4\right].
			\end{aligned}
		\end{equation*}
		I now define $\tilde{v}_t = [v_t', \varepsilon_t]'$ (so that $\tilde{v}_t$ is an i.i.d. mean-zero random variable), which yields
		\begin{equation*}
			Z_{tj}\varepsilon_t = \mathcal{F}_j(\dots, v_{t-1}, v_t)\varepsilon_t \equiv \tilde{\mathcal{F}}_j(\dots, \tilde{v}_{t-1}, \tilde{v}_t),
		\end{equation*}
		so that $\mathcal{F}_j$ continues to be a measurable function with arguments that are i.i.d. random variables. Therefore, Assumption \ref{Assumption-Main}.\ref{Assumption-Z-Rep} holds.
		
		Since $\varepsilon_t$ is independent across $t$ and of $v_s$ for $s = 1, \dots, T$ and identically distributed, it follows that
		\begin{equation*}
			\mathcal{F}_j(\dots, v^*_{-1}, v^*_0, v_1, \dots, v_t)\varepsilon_t = \tilde{\mathcal{F}}_j(\dots, \tilde{v}^*_{-1}, \tilde{v}^*_0, \tilde{v}_1, \dots, \tilde{v}_t),
		\end{equation*}
		where $\{\tilde{v}^*_t\}$ are i.i.d. copies of $\{\tilde{v}_t\}$. Thus,
		\begin{equation*}
			\breve{C} {\tilde{\rho}}^t \geq \mathbb{E}\left[\left(\tilde{\mathcal{F}}_j(\dots, \tilde{v}_{t-1}, \tilde{v}_t) - \tilde{\mathcal{F}}_j(\dots, \tilde{v}^*_{-1}, \tilde{v}^*_0, \tilde{v}_1, \dots, \tilde{v}_t)\right)^4\right].
		\end{equation*}
		 Therefore, Assumption \ref{Assumption-Main}.\ref{Assumption-UGMC} holds.
		
\end{proof}

\section{Proof of Theorem \ref{Theorem-Main} \label{Appendix-Proof}}

\begin{proof}
		
	Throughout, it is assumed that the null hypothesis in Equation \eqref{Equation-Null-Hypothesis} holds, so that $\varepsilon_{0t}$ is replaced by $\varepsilon_t$. The proof is a straightforward application of the results in \citet{Zhang:2018hy} (referred to as ZC18 in the sequel) and \citet{Zhang:2014vm} (referred to as ZC14 in the sequel). To this end, let $W_t = [W_{t1}, \dots, W_{tj}]'$ be a Gaussian sequence which is independent of $Z_t\varepsilon_t$ and preserves the autocovariance structure of $Z_t\varepsilon_t$. Let $L_{Z\varepsilon}= \underset{1 \leq j \leq k}{\text{ max }}\frac{1}{\sqrt T}{Z}_j'\varepsilon$ and $L_{W}= \underset{1 \leq j \leq k}{\text{ max }}\frac{1}{\sqrt T}\mathfrak{W}_{j}$ where $\mathfrak{W}_j$ is the $T\times 1$ vector containing the $j$th column of the matrix $W = [W_1, \dots, W_T]$.
		
	I first verify that the conditions in Assumption \ref{Assumption-Main}. are sufficient for Theorem 2.1 in ZC18 to hold. 
	
	Assumption 2.1 in ZC18 holds since by Assumption \ref{Assumption-Main}.\ref{Assumption-Z} $Z_{tj}\varepsilon_t$ has finite fourth moments, so that setting $\mathfrak{D}_n$ in ZC18 to $T^{(3 - 12\tilde{b} - 13b)/32}$, and $h(\cdot)$ in ZC18 to $h(x) = x^4$ satisfies the first of the two possible conditions in Assumption 2.1 of ZC18 by the assumption that $12\tilde{b} + 13b < 3$ (which is implied by the restrictions on $b$ and $\tilde{b}$ given in Assumption \ref{Assumption-Main}.\ref{Assumption-Dim}).
	
	Assumption 2.2 in ZC18 holds by replacing $M$ in ZC18 with $b_T$, and setting $\gamma$ in ZC18 to $\gamma = T^{-(1 - 4\tilde{b} - 7b)/8)} = o(1)$ (see also the sentence immediately following Theorem 3.2 in ZC14).
	
	Assumption 2.3 in ZC18 contains two conditions. The first condition (what they express as $c_1 < \underset{1 \leq j \leq k}{\text{ min }}\sigma_{j, j}\leq \underset{1 \leq j \leq k}{\text{ max }}\sigma_{j, j} < c_2$) holds since by Assumption \ref{Assumption-Main}.\ref{Assumption-Z}, $Z_{tj}\varepsilon_t$ has non-degenerate finite second moments. The second condition (what they express as $\sum_{j = 1}^{+\infty}j\theta_{j, k, 3}< c_3$) is satisfied by Assumption \ref{Assumption-Main}.\ref{Assumption-UGMC}, since, as per Remark 3.2 in \citet{Wang:2019ta}, the GMC condition used in the present paper (and arguably in the literature that uses physical dependence measures more broadly) is equivalent to the one used in ZC18 and ZC14.
		
	Therefore, by Theorem 2.1 in ZC18, under the conditions in Assumption \ref{Assumption-Main}., the process $Z_t\varepsilon_t$ can be approximated by its Gaussian equivalent, i.e.,
	\begin{equation}
		\label{Equation-Bound}
			\underset{a\in \mathbb{R}}{\text{ sup }}\left|\mathbb{P}(L_{Z\varepsilon} \leq a) - \mathbb{P}(L_W \leq a)\right| \lesssim T^{-(1 - 4\tilde{b} - 7b)/8}.
	\end{equation}

	The bound in Equation \eqref{Equation-Bound} satisfies the first condition for Theorem 4.2 in ZC14.\footnote{The careful reader will have noticed that Theorem 4.2 in ZC14 appeals to the conditions in Theorem 3.3 in ZC14 which is virtually the same theorem as Theorem 2.1 in ZC18 except for an additional GMC assumption on the Gaussian equivalent of $Z_t\varepsilon_t$. However, a careful reading of the proof of Theorem 4.2 in ZC14 reveals that this theorem exclusively appeals to the conditions in Theorem 3.3 in ZC14 in order to establish a bound on the Gaussian approximation as in Equation \eqref{Equation-Bound} above. Since Theorem 2.1 in ZC18 establishes this bound without this assumption, the GMC on the Gaussian equivalent of $Z_t\varepsilon_t$ can be dropped in appealing to Theorem 4.2 in ZC14.} It remains to verify Condition 2 of Assumption 4.1 in ZC14. Condition 2 in Assumption 4.1 in ZC14 requires checking four conditions.
	
	The first condition (what they express as $\bar{\sigma}_{x, M}\lor \bar{\sigma}_{x, N}\lesssim n^{s_1}$) is satisfied by Assumption \ref{Assumption-Main}.\ref{Assumption-UGMC}, since by Remark 4.1 in ZC14, the first condition of Condition 2 of Assumption 4.1 in ZC14 is satisfied with $s_1 = 0$ whenever the data in question obeys the GMC condition.

	The second condition (what they express as ${\varsigma}_{x, M}\lor {\varsigma_{x, N}}\lesssim n^{s'_2/2}$) is satisfied by setting their $M, N$ to $b_T$ and $s'_2$ to $\tilde{b}$ and noticing that for all $j = 1, \dots, k$, 
	\begin{equation*}
		\left(\frac{1}{b_T^2}\mathbb{E}\left[\left|\sum_{t = 1}^{b_T}{Z_{tj}\varepsilon_t}\right|^4\right]\right)^{1/4} \leq \left(\frac{1}{b_T^2}\mathbb{E}\left[\underset{1 \leq t \leq b_T}{\text{ max }}Z_{tj}^4\varepsilon_t^4\right]{b_T^4}\right)^{1/4} \leq b_T^{1/2}\mathbb{E}\left[\underset{1 \leq t \leq T}{\text{ max }}Z_{tj}^4\varepsilon_t^4\right] \lesssim b_T^{1/2}
	\end{equation*}
	by Assumption \ref{Assumption-Main}.\ref{Assumption-Z}. Thus,
	\begin{equation*}
		\left(\mathbb{E}\left[\underset{1 \leq j \leq k}{\text{ max }}\left|\sum_{t = 1}^{b_T}\frac{Z_{tj}\varepsilon_t}{\sqrt{b_T}}\right|^4\right]\right)^{1/4} \lesssim b_T^{1/2} \lesssim T^{\tilde{b}/2},
	\end{equation*}
	by Assumption \ref{Assumption-Main}.\ref{Assumption-Dim}. This ensures that the second condition of Condition 2 of Assumption 4.1 in ZC14 is satisfied.

	The third condition (what they express as ${\varpi}_{x}\lesssim n^{s_3}$) is satisfied by Assumption \ref{Assumption-Main}.\ref{Assumption-Weird-MC} and setting $s_3 = \breve{b}$.
	
	The fourth condition (what they express as $s_b' > 0$) is satisfied since $(1-6\tilde{b})/2 > 0$, $(1-6b-\tilde{b})/2 - \tilde{b} > 0$, and $\tilde{b} - 2b - \breve{b} > 0$ by the assumption made on $b$, $\tilde{b}$, and $\breve{b}$ in Assumption \ref{Assumption-Main}.\ref{Assumption-Dim} and Assumption \ref{Assumption-Main}.\ref{Assumption-Weird-MC}.
	
	It is hence possible to invoke Theorem 4.2 in ZC14, which yields
	\begin{equation*}
		\underset{\alpha \in (0, 1)}{\text{ sup }}\left|\mathbb{P}(L_{Z\varepsilon}\leq \tilde{c}(\alpha)) - \alpha \right| \lesssim T^{-c},
	\end{equation*}
	where
	 \begin{equation*}
	 	\begin{aligned}
	 	\tilde{c}(\alpha) &= \text{inf}\left\{\gamma \in \mathbb{R}:\mathbb{P}(\tilde{L}_{\hat A} \leq \gamma| \{{Z}_t{\varepsilon}_{0t}\}_{t = 1}^T) \geq 1-\alpha\right\},\\
		\tilde{L}_{\hat A} &= \underset{1 \leq j \leq k}{\text{ max }} \frac{1}{\sqrt T}\sum_{t = 1}^{l_T}\hat{A}_{tj}e_t,
		\end{aligned}
	\end{equation*}	
and $\{e_t\}$ is a sequence of i.i.d. $\mathcal{N}[0, 1]$ random  variables. The constant $c$ is positive, since $(1- 5b - \tilde{b})/2 > 0$, $(1-6b-\tilde{b})/2 - \tilde{b} > 0$, $\tilde{b} - 2b - \breve{b} > 0$, and $(1 - 4\tilde{b} - 7b) > 0$ by the assumption made on $b$, $\tilde{b}$, and $\breve{b}$ in Assumption \ref{Assumption-Main}.\ref{Assumption-Dim} and Assumption \ref{Assumption-Main}.\ref{Assumption-Weird-MC}.

Finally, notice that the procedure proposed in this paper is computing only the means of random variables. This means that the `influence function' ($IF$ in ZC14) does not have to be estimated (since the true value is known under the null hypothesis). It also means that the statistic is `exactly linear', i.e., the remainder term $\mathcal{R}_{N_0}$ in ZC14 is zero. This implies that the two conditions in Assumption 5.1 in ZC14 are trivially satisfied (since, in their notation, $\mathcal{E}_{AB} = \mathcal{R}_{N_0} = 0$). Also, the block length of $Z_{tj}\varepsilon_t$ is simply unity so that $N_0$ in ZC14 is simply $T$ and the dimension of the parameter to be estimated ($q_0$ in their notation) is simply the number of IVs considered, $k$. By Theorem 5.1 in ZC14, which requires the conditions for Theorem 4.1 and Assumption 5.1 in ZC14 to hold, and the identifying moment condition $\mathbb{E}[Z'\varepsilon] = 0$, it hence follows that
	\begin{equation*}
		\underset{\alpha \in (0, 1)}{\text{ sup }}\left|\mathbb{P}\left(\underset{1 \leq j \leq k}{\text{ max }}\sqrt T\left|\frac 1 T Z_j'\varepsilon\right|\leq {c}(\alpha)\right) - \alpha \right| \lesssim T^{-c},
	\end{equation*}
i.e.,
	\begin{equation*}
		\underset{\alpha \in (0, 1)}{\text{ sup }}\left|\mathbb{P}\left(\mathcal{R}\leq {c}(\alpha)\right) - \alpha \right| \lesssim T^{-c}.
	\end{equation*}
Letting $T\to \infty$ yields the required result.

\end{proof}

\section{Derivation of Concentration Parameters for Simulations \label{Appendix-Conc-Par}}

I first derive the concentration parameter for the unobserved oracle first stage. The derivations for this concentration parameter are very similar to those in the online appendix of \citet{Mavroeidis:2014ge}.

The endogenous variables in the model the econometrician estimates (Equation \eqref{Equation-Estimate-Econometrician}) can be written in terms of the excluded IVs as\footnote{As in \citet[Online Appendix]{Mavroeidis:2014ge}, the constant can be omitted for the purposes of deriving the concentration matrix because in all simulations it is set equal to zero in the structural NKPC.}
\begin{equation}
	\label{Equation-Raw-First-Stage}
	\begin{bmatrix}
	\pi_{t+1} - \pi_{t-1}\\
	s_t	
	\end{bmatrix} =
	\underbrace{\begin{bmatrix}
		1 & 0 & 0\\
		0 & 0 & 0\\
	\end{bmatrix}}_{E_1}\begin{bmatrix}
		\pi_{t+1}\\
		s_{t+1}\\
		f_{t+1}\\
	\end{bmatrix} + 
	\underbrace{\begin{bmatrix}
		0 & 0 & 0\\
		0 & 1 & 0\\
	\end{bmatrix}}_{E_2}\begin{bmatrix}
		\pi_t\\
		s_t\\
		f_t
	\end{bmatrix} + \underbrace{\begin{bmatrix}
		-1 & 0 & 0\\
		0 & 0 & 0
	\end{bmatrix}}_{E_3}\begin{bmatrix}
		\pi_{t-1}\\
		s_{t-1}\\
		f_{t-1}
	\end{bmatrix}.
\end{equation}

Define $p_t = [\pi_{t+1} - \pi_{t-1}, s_t]'$, $R_t = [\pi_t, s_t, f_t]'$, $u_t = [u_{1t}, u_{2t}, u_{3t}]'$,
\begin{equation*}
\Psi = \begin{bmatrix}
		a_{11} & a_{12} & a_{13}\\
		a_{21} & a_{22} & a_{23}\\
		a_{31} & a_{32} & a_{33}\\
	\end{bmatrix},\text{ and } \Omega = \begin{bmatrix}
 	\omega_{11} & \omega_{12} & \omega_{13}\\
 	\omega_{21} & \omega_{22} & \omega_{23}\\
 	\omega_{31} & \omega_{32} & \omega_{33}\\
 \end{bmatrix}
 .
\end{equation*}

Equation \eqref{Equation-Raw-First-Stage} can now be written as
\begin{equation*}
	\begin{aligned}
	p_t &= E_1R_{t+1} + E_2R_t + E_3R_{t-1}\\	
	    &= (E_1\Psi^2 + E_2\Psi + E_3)R_{t-1} + (E_1\Psi + E_2)u_t + E_1u_{t+1}\\
	    &= DR_{t-1} + w_t,
	\end{aligned}
\end{equation*}
for $D = E_1\Psi^2 + E_2\Psi + E_3$, $w_t = (E_1\Psi + E_2)u_t + E_1u_{t+1}$.

Assuming that $R_t$ is stationary, and letting $\Gamma = \mathbb{V}\left[R_t\right]$,
\begin{equation*}
	\text{vec}(\Gamma) = \left(I_9 - \Psi\otimes \Psi\right)^{-1}\text{vec}(\Omega).
\end{equation*}

The population projection of $p_t$ on $R_{t-1}$ has coefficient matrix given by
\begin{equation*}
	\begin{aligned}
	M &= \mathbb{E}[p_tR_{t-1}']\Gamma^{-1}\\
	  &= \mathbb{E}\left[(DR_{t-1} + w_t)R_{t-1}'\right]\Gamma^{-1}\\
	  & = D\mathbb{E}\left[R_{t-1}R_{t-1}'\right]\Gamma^{-1}\\
	  & = D,
	\end{aligned}
\end{equation*}
since $\mathbb{E}[w_tR_{t-1}'] = \mathbb{E}\left[\left((E_1\Psi + E_2)u_t + E_1u_{t+1}\right)R_{t-1}'\right] = 0$.

The projection error of the unobserved oracle first stage is given by
\begin{equation*}
	\begin{aligned}
	e_t &= p_t - MR_{t-1}\\
		&= DR_{t-1} + w_t - DR_{t-1}\\
		&= w_t.
	\end{aligned}
\end{equation*}

The variance of the population projection error of the unobserved oracle first stage $\Sigma = \mathbb{V}[e_t]$ is hence given by
\begin{equation*}
	\begin{aligned}
	\Sigma &= \mathbb{V}[w_t]\\
		   &= (E_1\Psi + E_2)\Omega(E_1A + E_2)' + E_1\Omega E_1'.\\
	\end{aligned}
\end{equation*}

The concentration matrix of the unobserved oracle first stage is then given by
\begin{equation*}
	C = T\Sigma^{-1/2}D\Gamma D'\Sigma^{-1/2 '},
\end{equation*}
where $\Sigma^{-1/2}\Sigma^{-1/2, '} = \Sigma^{-1}$. The minimum eigenvalue $\mu^2_O$ of matrix $C$ gives the concentration parameter of the unobserved oracle first stage.

The steps to derive the concentration matrix corresponding to the first stage observed by the econometrician are similar. Define $\tilde{R}_t = [\pi_{t}, s_{t}, Q_t']'$, and
\begin{equation*}
	\Xi = \begin{bmatrix}
 1 & 0 & 0\\
 0 & 1 & 0\\
 0_{m\times 1} & 0_{m\times 1} & \xi	
 \end{bmatrix}, F =\begin{bmatrix}
 0_{2\times 2} & 0_{2\times m}\\
 0_{m\times 2} & I_m
\end{bmatrix},
\end{equation*}
so that $\tilde{R} = \Xi R_t + \tilde{u}_t$, where $\tilde{u}_t = [0, 0, u_{4t}']'$, $\mathbb V[\tilde{u}_t] = F$.

Assuming that $R_t$ is stationary, $\tilde{R}_t$ is also stationary. Letting $\tilde{\Gamma} = \mathbb{V}[\tilde{R}_t]$,
\begin{equation*}
	\begin{aligned}
	\tilde \Gamma &= \mathbb{V}\left[\Xi R_t + \tilde{u}_t\right]\\
				  &= \Xi\mathbb{V}\left[R_t\right]\Xi' + F\\
				  &= \Xi\Gamma\Xi' + F.
	\end{aligned}
\end{equation*}

The population projection of $p_t$ on $\tilde{R}_{t-1}$ has coefficient matrix given by
\begin{equation*}
	\begin{aligned}
	\tilde M &= \mathbb{E}[p_t\tilde{R}_{t-1}']{\tilde{\Gamma}}^{-1}\\
			 &= \mathbb{E}[p_t({R}_{t-1}'\Xi' + \tilde{u}_t')]{\tilde{\Gamma}}^{-1}\\
			 &= \mathbb{E}[p_t{R}_{t-1}']\Xi'{\tilde \Gamma}^{-1	}  + \mathbb{E}[p_t\tilde{u}_t']{\tilde{\Gamma}}^{-1}\\
			 &= \mathbb{E}[(DR_{t-1} + w_t){R}_{t-1}']\Xi'{\tilde{\Gamma}}^{-1}\\
			 & = D\mathbb{E}[R_{t-1}R_{t-1}']\Xi'{\tilde{\Gamma}}^{-1} + \mathbb{E}[w_tR_{t-1}']\Xi'{\tilde{\Gamma}}^{-1}\\
			 &= D\Gamma\Xi'{\tilde{\Gamma}}^{-1}.\\
	\end{aligned}
\end{equation*}

The projection error for the observed first stage is given by
\begin{equation*}
	\begin{aligned}
		\tilde{e}_t &= p_t - \tilde{M}\tilde{R}_{t-1}\\
					&=DR_{t-1} + w_t - \tilde{M}\tilde{R}_{t-1}\\
					&= DR_{t-1} + w_t - \tilde{M}(\Xi R_{t-1} + \tilde{u}_{t-1})\\
					&= (D - \tilde{M}\Xi)R_{t-1} + w_t - M\tilde{u}_{t-1}.
	\end{aligned}
\end{equation*}

The variance of the population projection error of the observed first stage $\tilde{\Sigma} = \mathbb{V}[\tilde{e}_t]$ is hence given by
\begin{equation*}
	\begin{aligned}
	\tilde{\Sigma} &= \mathbb{V}[(D - \tilde{M}\Xi)R_{t-1}] + \mathbb{V}[w_t] + \mathbb{V}[\tilde{M}\tilde{u}_{t-1}]\\
				   &= (D- \tilde{M}\Xi)\Gamma(D - \tilde{M}\Xi)' + \Sigma + \tilde{M}F\tilde{M}'.
	\end{aligned}
\end{equation*}

The concentration matrix of the unobserved oracle first stage is then given by
\begin{equation*}
	\tilde{C} = T\tilde{\Sigma}^{-1/2}\tilde{M}\tilde{\Gamma}\tilde{M}'\tilde{\Sigma}^{-1/2 '},
\end{equation*}
where $\tilde{\Sigma}^{-1/2}\tilde{\Sigma}^{-1/2, '} = \tilde{\Sigma}^{-1}$. The minimum eigenvalue $\mu^2_E$ of matrix $\tilde{C}$ gives the concentration parameter of the observed first stage.

\newgeometry{left=1cm,bottom=0.1cm}

\section{Data \label{Appendix-Data}}

\begin{table}[H]
\tiny
\caption{Data description.}\label{Table-Data}
\centering
\begin{tabular}{lp{0.3\textwidth}l lp{0.4\textwidth}l}
\toprule
Code  &  Description & Type & Code & Description & Type\\
\midrule

RPI & Real Personal Income  & G  & PERMIT & New Private Housing Units Authorized by Building Permits &  G\\ 
  INDPRO & Industrial Production Index  & G  & PERMITNE &New Private Housing Units Authorized by Building Permits in the Northeast Census Region  &  G\\ 
  CUMFNS & Capacity Utilization: Manufacturing  & D  & PERMITMW & New Private Housing Units Authorized by Building Permits in the Midwest Census Region & G \\ 
  IPFINAL & Industrial Production: Final Products (Market Group)  & G & PERMITS & New Private Housing Units Authorized by Building Permits in the South Census Region & G \\ 
  IPCONGD & Industrial Production: Consumer Goods  & G & PERMITW & New Private Housing Units Authorized by Building Permits in the West Census Region &  G\\ 
  IPDCONGD & Industrial Production: Durable Consumer Goods & G & DPCERA3M086SBEA & Real personal consumption expenditures (chain-type quantity index)  &  G\\ 
  IPNCONGD & Industrial Production: Nondurable Consumer Goods & G & CMRMTSPL & Real Manufacturing and Trade Industries Sales  &  G\\ 
  IPBUSEQ & Industrial Production: Business Equipment & G & UMCSENT & University of Michigan: Consumer Sentiment &  G\\ 
  IPMAT & Industrial Production: Materials & G & M1SL & M1 Money Stock &  G\\ 
  IPMANSICS & Industrial Production: Manufacturing (SIC) & G & M2SL & M2 Money Stock &  G\\ 
  IPB51222s & Industrial Production: Residential utilities & G & TOTRESNS & Total Reserves of Depository Institutions &  G\\ 
  IPFUELS & Industrial Production: Fuels  & G & BUSLOANS & Commercial and Industrial Loans, All Commercial Banks &  G\\ 
  CLF16OV & Civilian Labor Force Level & G & REALLN & Real Estate Loans, All Commercial Banks &  G\\ 
  UNRATE & Unemployment Rate & N & NONREVSL & Total Nonrevolving Credit Owned and Securitized, Outstanding & G \\ 
  UEMPMEAN & Average Weeks Unemployed & D & DTCOLNVHFNM &Consumer Motor Vehicle Loans Owned by Finance Companies, Outstanding  &  G\\ 
  UEMPLT5 & Number Unemployed for Less Than 5 Weeks &  G& DTCTHFNM & Total Consumer Loans and Leases Owned and Securitized by Finance Companies, Outstanding &  G\\ 
  UEMP5TO14 & Number Unemployed for 5-14 Weeks & G & INVEST & Securities in Bank Credit, All Commercial Banks & G \\ 
  UEMP15OV & Number Unemployed for 15 Weeks \& Over & G & FEDFUNDS & Effective Federal Funds Rate &  G\\ 
  UEMP15T26 & Number Unemployed for 15-26 Weeks  & G & TB3SMFFM & 3-Month Treasury Bill Minus Federal Funds Rate &  N\\ 
  UEMP27OV & Number Unemployed for 27 Weeks \& Over & G & TB6SMFFM & 6-Month Treasury Bill Minus Federal Funds Rate & N \\ 
  PAYEMS & All Employees, Total Nonfarm & G & T1YFFM & 1-Year Treasury Constant Maturity Minus Federal Funds Rate & N \\ 
  USGOOD & All Employees, Goods-Producing & G & T5YFFM & 5-Year Treasury Constant Maturity Minus Federal Funds Rate & N \\ 
  CES1021000001 & All Employees, Mining & G & T10YFFM & 10-Year Treasury Constant Maturity Minus Federal Funds Rate & N \\ 
  USCONS & All Employees, Construction & G & AAAFFM & Moody's Seasoned Aaa Corporate Bond Minus Federal Funds Rate &  N\\ 
  MANEMP & All Employees, Manufacturing & G & BAAFFM & Moody's Seasoned Baa Corporate Bond Minus Federal Funds Rate & N \\ 
  PRS85006173 & Nonfarm Business Sector: Labor Share & GG & TWEXMMTH & Trade Weighted U.S. Dollar Index: Major Currencies, Goods &  G\\ 
  DMANEMP & All Employees, Durable Goods & G & EXSZUS & Switzerland / U.S. Foreign Exchange Rate & G \\ 
  NDMANEMP & All Employees, Nondurable Goods & G & EXJPUS & Japan / U.S. Foreign Exchange Rate &  G\\ 
  SRVPRD & All Employees, Service-Providing & G & EXUSUK & U.S. / U.K. Foreign Exchange Rate & G \\ 
  USTPU & All Employees, Trade, Transportation, and Utilities & G & EXCAUS & Canada / U.S. Foreign Exchange Rate & G \\ 
  USWTRADE & All Employees, Wholesale Trade & G & WPSFD49502 & Producer Price Index by Commodity for Final Demand: Personal Consumption Goods  & G \\ 
  USTRADE & All Employees, Retail Trade &  G & WPSID61 & Producer Price Index by Commodity for Intermediate Demand by Commodity Type: Processed Goods for Intermediate Demand &  G\\ 
  USFIRE & All Employees, Financial Activities & G & WTISPLC & Spot Crude Oil Price: West Texas Inter mediate (WTI) & G \\ 
  USGOVT & All Employees, Government & G & PPICMM & Producer Price Index by Commodity Metals and metal products: Primary nonferrous metals & G \\ 
  CES0600000007 & Average Weekly Hours of Production and Nonsupervisory Employees, Goods-Producing & G & CPIAUCSL & Consumer Price Index for All Urban Consumers: All Items in U.S. City Average &  G\\ 
  AWOTMAN & Average Weekly Overtime Hours of Production and Nonsupervisory Employees, Manufacturing & D & CUSR0000SAC & Consumer Price Index for All Urban Consumers: Commodities in U.S. City Average & G \\ 
  AWHMAN & Average Weekly Hours of Production and Nonsupervisory Employees, Manufacturing & D & CUSR0000SAD & Consumer Price Index for All Urban Consumers: Durables in U.S. City Average &  G\\ 
  CES0600000008 & Average Hourly Earnings of Production and Nonsupervisory Employees, Goods-Producing & G & CUSR0000SAS & Consumer Price Index for All Urban Consumers: Services in U.S. City Average &  G\\ 
  CES2000000008 & Average Hourly Earnings of Production and Nonsupervisory Employees, Construction & G & PCEPI & Personal Consumption Expenditures: Chain-type Price Index  &  G\\ 
  CES3000000008 & Average Hourly Earnings of Production and Nonsupervisory Employees, Manufacturing & G & DDURRG3M086SBEA & Personal consumption expenditures: Durable goods (chain-type price index) &  G\\ 
  HOUST & Housing Starts: Total: New Privately Owned Housing Units Started &  G& DNDGRG3M086SBEA & Personal consumption expenditures: Nondurable goods (chain-type price index)  &  G\\ 
  HOUSTNE & Housing Starts in Northeast Census Region & G & DSERRG3M086SBEA & Personal consumption expenditures: Services (chain-type price index)  &  G\\ 
  HOUSTMW & Housing Starts in Midwest Census Region & G & GDPDEF & Gross Domestic Product: Implicit Price Deflator &  G\\ 
  HOUSTS & Housing Starts in South Census Region & G & WILL5000IND & Wilshire 5000 Total Market Index & G \\ 
  HOUSTW & Housing Starts in West Census Region &  G & GDPC1 & Real Gross Domestic Product & G \\ 
\bottomrule
\multicolumn{6}{l}{\emph{Notes}:}\\
\multicolumn{6}{l}{N refers to no transformation of the data.}\\
\multicolumn{6}{l}{G refers transforming variable $f_t$ $f_t$ by computing $100\left(\log(f_t) - \log(f_{t-1})\right)$.}\\
\multicolumn{6}{l}{D refers transforming variable $f_t$  by computing $f_t - f_{t-1}$.}\\
\multicolumn{6}{l}{GG refers transforming variable $f_t$ by computing $0.1226\times 100\log(f_t/100)$ (as in \citet{Kleibergen:2009do, Gali:1999tx}).}\\

\end{tabular}
\end{table}

\restoregeometry

\end{document}